\newcommand{\R}{\mathbb{R}} \newcommand{\N}{\mathbb{N}}
\newcommand{\SNE}{X^{\text{SNE}}}
\title{\normalfont\LARGE\bfseries  Strong Nash Equilibria in Games
  with the \\ Lexicographical Improvement Property}
\author{Tobias Harks \and Max Klimm \and Rolf~H.~M\"ohring
\\ Preprint 15/07/2009}
\institute{
  Institut f\"ur Mathematik, Technische Universit\"at Berlin \\
  \email{\{harks,klimm,moehring\}@math.tu-berlin.de} }
\begin{document}

\maketitle
\begin{abstract}
  We introduce a class of finite strategic games with the property
  that every deviation of a coalition of players that is profitable to
  each of its members strictly decreases the lexicographical order of
  a certain function defined on the set of strategy profiles. We call
  this property the \emph{Lexicographical Improvement Property (LIP)}
  and show that it implies the existence of a generalized strong
  ordinal potential function. We use this characterization to derive
  existence, efficiency and fairness properties of strong Nash
  equilibria.
  
  We then study a class of games that generalizes congestion games
  with bottleneck objectives that we call \emph{bottleneck congestion
    games}.  We show that these games possess the LIP and thus the
  above mentioned properties.  For bottleneck congestion games in
  networks, we identify cases in which the potential function
  associated with the LIP leads to polynomial time algorithms
  computing a strong Nash equilibrium.

  Finally, we investigate the LIP for infinite games. We show that the
  LIP does not imply the existence of a generalized strong ordinal
  potential, thus, the existence of SNE does not follow. Assuming that
  the function associated with the LIP is continuous, however, we
  prove existence of SNE.  As a consequence, we prove that bottleneck
  congestion games with infinite strategy spaces and continuous cost
  functions possess a strong Nash equilibrium.
  \end{abstract}

\section{Introduction}
The theory of non-cooperative games is used to study situations that
involve rational and selfish agents who are motivated by optimizing
their own utilities rather than reaching some social optimum.  In such
a situation, a state is called a pure Nash equilibrium (PNE) if it is
stable in the sense that no player has an incentive to unilaterally
deviate from its strategy. While the PNE concept excludes the
possibility that a single player can unilaterally improve her utility,
it does not necessarily imply that a PNE is stable against coordinated
deviations of a group of players if their joint deviation is
profitable for each of its members.  So when coordinated actions are
possible, the Nash equilibrium concept is not sufficient to analyze
stable states of a game.

To cope with this issue of coordination, we adopt the solution concept
of a \emph{strong equilibrium} (SNE for short) proposed by
Aumann~\cite{Aumann59}. In a strong equilibrium, no coalition (of any
size) can deviate and strictly improve the utility of each of it
members (while possibly lowering the utility of players outside the
coalition).  Clearly, every SNE is a PNE, but the converse does not
always hold. Thus, even though SNE may rarely exist, they form a very
robust and appealing stability concept.

One of the most successful approaches in establishing existence of PNE
(as well as SNE) is the potential function approach initiated by
Monderer and Shapley~\cite{Monderer:1996} and later generalized by
Holzman and Law-Yone~\cite{Holzman97} to strong equilibria: one
defines a real-valued function $P$ on the set of strategy profiles of
the game and shows that every improving move of a coalition (which is
profitable to each of its members) strictly reduces (increases) the
value of $P$.  Since the set of strategy profiles of such a (finite)
game is finite, every sequence of improving moves reaches a SNE. In
particular, the global minimum (maximum) of $P$ is a SNE. The main
difficulty is, however, that for most games it is hard to prove or
disprove the existence of such a potential function.

Given that the existence of a real-valued potential is hard to detect,
we derive in this paper an equivalent property
(Theorem~\ref{theorem:equivalent}) that we call the
\emph{lexicographical improvement property (LIP)}. We consider
strategic games $G = (N,X,\pi)$, where $N$ is the set of players, $X$
the strategy space, and players experience private non-negative costs
$\pi_i(x),i\in N$, for a strategy profile $x$.  We say that $G$ has
the LIP if there exists a function $\phi:X\rightarrow\R_+^q, q\in \N,$
such that every improving move (profitable deviation of an arbitrary
coalition) from $x\in X$ strictly reduces the sorted lexicographical
order of $\phi(x)$ (see Definition~\ref{definition:lip}). We say that
$G$ has the $\pi$-LIP if $G$ satisfies the LIP with
$\phi_i(x)=\pi_i(x), i\in N$, that is, every improving move strictly
reduces the lexicographical order of the players' private costs.
Clearly, requiring $q=1$ in the definition of the LIP reduces to the
case of a generalized strong ordinal potential.

The main contribution of this paper is twofold. We first study
desirable properties of arbitrary finite and infinite games having the
LIP and $\pi$-LIP, respectively. These properties concern the
existence of SNE, efficiency and fairness of SNE, and computability of
SNE. Secondly, we identify an important class of games that we term
bottleneck congestion games for which we can actually prove the
$\pi$-LIP and, hence, prove that these games possess SNE with the
above desirable properties. In the following, we will give an informal
definition of bottleneck congestion games.

Let us first recall the definition of a standard congestion game. In a
congestion game, there is a set of elements (facilities), and the pure
strategies of players are subsets of this set.  Each facility has a
cost that is a function of its \emph{load}, which is usually a
function of the number (or total weight) of players that select
strategies containing the respective facility.  The private cost of a
player's strategy in a standard congestion game is given by the
\emph{sum} over the costs of the facilities in the strategy.  In a
bottleneck congestion game, the private cost function of a player is
equal to the cost of the \emph{most expensive} facility that she uses
($L_{\infty}$-norm of the vector of players' costs of the strategy).

Before we outline our results, we briefly explain the importance of
bottleneck objectives in congestion games with respect to real-world
applications.  Referring to previous work by Keshav~\cite{Keshav97},
Cole et al.~\cite{ColeDR06} pointed out that the performance of a
communication network is closely related to the performance of its
bottleneck (most congested) link.  This behavior is also stressed by
Banner and Orda~\cite{BannerO07}, who studied Nash equilibria in
routing games with bottleneck objectives.  Similar observations are
reported by Qiu et al.~\cite{Qiu06}, who investigated the
applicability of theoretical results of selfish routing models to
realistic models of the Internet.

\subsection{Our Results}
We characterize games having the LIP by means of the existence of a
generalized strong ordinal potential function. The proof of this
characterization is constructive, that is, given a game $G$ having the
LIP for a function $\phi$, we explicitely construct a generalized
strong ordinal potential $P$.  We further investigate games having the
$\pi$-LIP with respect to efficiency and fairness of SNE.  Our
characterization implies that there are SNE satisfying various
efficiency and fairness properties, e.g., bounds on the prices of
stability and anarchy, Pareto optimality, and min-max fairness.

We establish that bottleneck congestion games possess the $\pi$-LIP
and, thus, possess SNE with the above mentioned properties.  Moreover,
our characterization of games having the LIP implies that bottleneck
congestion games possess the strong finite improvement property. For
bottleneck congestion games in networks, we identify cases in which
SNE can be computed in polynomial time.

It is worth noting that for singleton congestion games, Even-Dar et
al.~\cite{Even07} and Fabrikant et al.~\cite{Fabrikant04}, have
already proved existence of PNE by arguing that the vector of facility
costs decreases lexicographically for every improving move. Andelman
et al.~\cite{Andelman09} used the same argument to establish even
existence of SNE in this case. Our work generalizes these results to
arbitrary strategy spaces and more general cost functions.  In
contrast to most congestion games considered so far, we require only
that the cost functions on the facilities satisfy three properties:
"non-negativity", "independence of irrelevant choices", and
"monotonicity". Roughly speaking, the second and third condition
assume that the cost of a facility solely depends on the set of
players using the respective facility and that this cost decreases if
some players leave this facility.  Thus, this framework extends
classical \emph{load-based} models in which the cost of a facility
depends on the number or total weight of players using the respective
facility.  Our assumptions are weaker than in the load-based models
and even allow that the cost of a facility may depend on the
\emph{set} of players using this facility.

We then study the LIP in \emph{infinite} games, that is, games with
infinite strategy spaces that can be described by compact subsets of
$\R^p,\,p\in \N$. We first show that our characterization of finite
games with the LIP does not hold anymore (essentially resembling
Debreu's impossibility result~\cite{Debreu:1954}). We prove, however,
that continuity of $\phi$ in the definition of LIP is sufficient for
the existence of SNE.  Our existence proof is constructive, that is,
we outline an algorithm whose output is a SNE.

We consequently introduce \emph{infinite bottleneck congestion games}.
An infinite bottleneck congestion game arises from a bottleneck
congestion game $G$ by allowing players to fractionally distribute a
certain \emph{demand} over the pure strategies of $G$.  We prove that
these games have the $\pi$-LIP provided that the cost functions on the
facilities are non-negative and non-decreasing.  It turns out,
however, that the function $\pi$ may be discontinuous on the strategy
space (even if the cost functions on the facilities are continuous).
Thus, the existence of SNE does not immediately follow.  We solve this
difficulty by generalizing the LIP.  As a consequence, we obtain for
the first time the existence of SNE for infinite bottleneck congestion
games with non-decreasing and continuous cost functions.  For
\emph{bounded} cost functions on the facilities (that may be
discontinuous), we show that $\alpha$-approximate SNE exist for every
$\alpha>0$. Finally, we show that $\alpha$-approximate SNE can be
computed in polynomial time for bottleneck congestion games in
networks.

In the final section, we show that our methods presented in this paper
also apply to a more general framework.

\subsection{Related Work}

The SNE concept was introduced by Aumann~\cite{Aumann59} and refined
by Bernheim et al.~\cite{Bernheim87} to Coalition-Proof Nash
Equilibrium (CPNE), which is a state that is stable against those
deviations, which are themselves resilient to further deviations by
subsets of the original coalition.  This implies that every SNE is
also a CPNE, but the converse need not hold.

Congestion games were introduced by Rosenthal~\cite{Rosenthal:1973}
and further studied by Monderer and Shapley~\cite{Monderer:1996}.
Holzman and Law-Yone~\cite{Holzman97} studied the existence of SNE in
congestion games with monotone increasing cost functions. They showed
that SNE need not exist in such games and gave a structural
characterization of the strategy space for symmetric (and
quasi-symmetric) congestion games that admit SNE. Based on the
previous work of Monderer and Shapley~\cite{Monderer:1996}, they also
introduced the concept of a \emph{strong potential function}: a
function on the set of strategy profiles that decreases for every
profitable deviation of a coalition.  Rozenfeld and
Tennenholtz~\cite{RozenfeldT06} further explored the existence of
(correlated) SNE in congestion games with non-increasing cost
functions.

A further generalization of congestion games has been proposed by
Milchtaich~\cite{Milchtaich:1996}, where he allows for player-specific
cost functions on the facilities (see also Mavronicolas et
al.~\cite{Mavronicolas:2007}, Gairing et al.~\cite{GairingMT06} and
Ackermann et al.~\cite{Ackermann09} for subsequent work on weighted
congestion games with player-specific cost functions). Under
restrictions on the strategy space (singleton strategies), Milchtaich
proves existence of pure Nash equilibria. As shown by Voorneveld et
al.~\cite{Voorneveld:1999}, the model of Konishi et
al.~\cite{Konishi97a} is equivalent to that of Milchtaich, which is
worth noting as Konishi et al. established the existence of SNE in
such games.

Several authors studied the existence and efficiency (price of anarchy
and stability) of PNE and SNE in various specific classes of
congestion games.  For example, Even-Dar et al.~\cite{Even07} showed
that job scheduling games (on unrelated machines) admit a PNE by
arguing that the load-lexicographically minimal schedule is a PNE.
Fabrikant et al.~\cite{Fabrikant04} considered a scheduling model in
which the processing time of a machine may depend on the set of jobs
scheduled on the respective machine. For this model, they proved
existence of PNE analogous to the proof of Even-Dar et al.  Andelman
et al.~\cite{Andelman09} considered scheduling games on unrelated
machines and proved that the load-lexicographically minimal schedule
is even a SNE.  They further studied differences between PNE and SNE
and derived bounds on the (strong) price of anarchy and stability,
respectively. Chien and Sinclair~\cite{Chien09} recently studied the
strong price of anarchy of SNE in general congestion games.

Bottleneck congestion games with network structure have been
considered by Banner and Orda~\cite{BannerO07}. They studied existence
of PNE in the unsplittable flow and in the splittable flow setting,
respectively. They observed that standard techniques (such as
Kakutani's fixed-point theorem) for proving existence of PNE do not
apply to bottleneck routing games as the private cost functions may be
discontinuous. They proved existence of PNE by showing that bottleneck
games are better reply secure, quasi-convex, and compact.  Under these
conditions, they recall Reny's existence theorem~\cite{Reny99} for
better reply secure games with possibly discontinuous private cost
functions.  Banner and Orda, however, do not study SNE. We remark that
our proof of the existence of SNE is direct and constructive.

Bottleneck routing with \emph{non-atomic} players and elastic demands
has been studied by Cole et al.~\cite{ColeDR06}.  Among other results
they derived bounds on the price of anarchy in this setting. For
subsequent work on the price of anarchy in bottleneck routing games
with atomic and non-atomic players, we refer to the paper by Mazalo et
al.~\cite{MazalovMST06}.

\section{Preliminaries}

We consider strategic games $G = (N,X,\pi)$, where $N=\{1,\dots,n\}$
is the non-empty and finite set of players, $X = \varprod_{i \in N}
X_i$ is the non-empty strategy space, and $\pi : X \to
\mathbb{R}^{n}_{+}$ is the combined \emph{private cost} function
that assigns a private cost vector $\pi(x)$ to each strategy profile $x \in
X$.  These games are cost minimization games and we assume
additionally that the private cost functions are non-negative. A
strategic game is called \emph{finite} if $X$ is finite.

We use standard game theory notation; for a coalition $S\subseteq N$
we denote by $-S$ its complement and by $X_S = \varprod_{i \in S} X_i$
we denote the set of strategy profiles of players in $S$.

\begin{definition}[Strong Nash equilibrium (SNE)]
  A strategy profile $x$ is a strong Nash equilibrium if there is no
  coalition $\emptyset \neq S\subseteq N$ that has an alternative
  strategy profile $y_S \in X_S$ such that
  $\pi_i(y_S,x_{-S})-\pi_i(x)<0$ for all $i\in S$.
\end{definition}

A pair $\bigl(x,(y_S,x_{-S})\bigr) \in X\times X$ is called an
\emph{improving move} (or \emph{profitable deviation}) of coalition
$S$ if $\pi_i(x_S,x_{-S}) - \pi_i(y_S,x_{-S}) > 0 \text{ for all }i\in
S$. We denote by $I(S)$ the set of improving moves of coalition $S
\subseteq N$ in a strategic game $G = (N,X,\pi)$ and we set $I :=
\bigcup_{S \subseteq N} I(S)$.  We call a sequence of strategy
profiles $\gamma=(x^0,x^1,\dots)$ an \emph{improvement path} if every
tuple $(x^{k},x^{k+1})\in I$ for all $k = 0,1,2,\dots$.  One can
interpret an improvement path as a path in the so called
\emph{improvement graph $\mathcal{G}(G)$} derived from $G$, where
every strategy profile $x\in X$ corresponds to a node in
$\mathcal{G}(G)$ and two nodes $x, x'$ are connected by a directed
edge $(x,x')$ if and only if $(x,x')\in I$. We are interested in
conditions that assure that every improvement path is finite. A
necessary and sufficient condition is the existence of a generalized
\emph{strong} ordinal potential function, which we define below (see
also the potential function approach initiated by Monderer and
Shapley~\cite{Monderer:1996}, which has been generalized to strong
potentials by Holzman and Law-Yone~\cite{Holzman97}).
\begin{definition}[Generalized strong ordinal potential game]
\label{definition:gen_strong_ordinal_potential}
A strategic game $G = \left(N, X, \pi\right)$ is called a
\emph{generalized strong ordinal potential game} if there is a
function $P : X \to \mathbb{R}$ such that $P(x) - P(y) > 0$ for all
$(x,y) \in I$. $P$ is called a generalized strong ordinal potential of
the game $G$.
\end{definition}
In recent years, much attention has been devoted to games admitting
the finite improvement property (FIP), that is, each path of
single-handed (one player) deviations is finite. Equivalently, we say
that $G$ has the strong finite improvement property (SFIP) if every
improvement path is finite. Clearly, the SFIP implies the FIP, but the
converse need not be true.

It is known that both the SFIP and the existence of a generalized
strong ordinal potential are hard to prove or disprove for a
particular game. We define a class of games that we call \emph{games
  with the Lexicographical Improvement Property (LIP)} and show that
such games possess a generalized strong ordinal potential. For this
purpose, we will first define the sorted lexicographical order.
\begin{definition}[Sorted lexicographical order]
\label{definition:lexicographic_sorting}
Let $a,b\in\R_+^q$ and denote by $\tilde{a},\tilde{b}\in\R_+^q$ be the
sorted vectors derived from $a,b$ by permuting the entries in
non-increasing order, that is, $\tilde{a}_1\geq\cdots\geq \tilde{a}_q$
and $\tilde{b}_1\geq\cdots\geq \tilde{b}_q$. Then, $a$ is
\emph{strictly sorted lexicographically smaller} than $b$ (written $a\prec
b$) if there exists an index $m$ such that $\tilde{a}_i=\tilde{b}_i$
for all $i< m$, and $\tilde{a}_m<\tilde{b}_m$. The vector $a$ is
\emph{sorted lexicographically smaller} than $b$ (written $a\preceq b$) if
either $a\prec b$ or $\tilde{a}=\tilde{b}$.
\end{definition}

Roughly speaking, the lexicographical improvement property of a
strategic game requires that a vector-valued function $\phi : X \to
\mathbb{R}_+^q$ is strictly decreasing with respect to the sorted
lexicographical order on $\mathbb{R}_+^q$ for every improvement step.

\begin{definition}[Lexicographical improvement property, $\pi$-LIP]
  \label{definition:lip}
A finite strategic game $G= \left(N, X, \pi\right)$ possesses the
\emph{lexicographical improvement property} (LIP) if there exist $q\in
\N$ and a function $\phi:X\rightarrow \R^q_+$ such that $\phi(x) \succ
\phi(y)$ for all $(x,y)\in I$. $G$ has the $\pi$-LIP if $G$ has the
LIP for $\phi=\pi$.
\end{definition}

Clearly, the function $\phi$ is a generalized strong ordinal potential
if $q=1$. The next theorem (proof is given in the appendix) states
that requiring the LIP is equivalent to requiring the existence of a
generalized strong ordinal potential, regardless of $q$.

\begin{theorem}
\label{theorem:equivalent}
Let $G= \left(N, X, \pi\right)$ be a finite strategic game.  Then, the
following statements are equivalent.
\begin{enumerate}
\item G has the SFIP
\item $\mathcal{G}(G)$ is acyclic (contains no directed cycles)
\item G has a generalized strong ordinal potential
\item $G$ has the LIP
\item There exists $\phi:X\rightarrow \R^q_+$ and $M \in \N$ such that
  $P(x)=\sum_{i=1}^q\phi_i(x)^{M} $ is a generalized strong ordinal
  potential function for $G$.
\end{enumerate}
\end{theorem}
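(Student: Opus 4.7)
The plan is to prove the five statements equivalent by the cycle $(5)\Rightarrow(3)\Rightarrow(2)\Rightarrow(1)\Rightarrow(2)$ together with $(3)\Rightarrow(4)\Rightarrow(5)$, so that every implication is picked up. Most of these are routine from the definitions and from finiteness of $X$; the only substantive step is $(4)\Rightarrow(5)$, which I will address last.

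For the easy implications: $(5)\Rightarrow(3)$ is immediate, as the constructed $P$ is by definition a generalized strong ordinal potential. $(3)\Rightarrow(4)$ is also immediate: take $q=1$ and $\phi(x)=P(x)-\min_{z\in X}P(z)$, which is non-negative and strictly decreases along improving moves, hence satisfies $\phi(x)\succ\phi(y)$ for all $(x,y)\in I$. For $(3)\Rightarrow(2)$, a directed cycle $x^0\to x^1\to\cdots\to x^k=x^0$ in $\mathcal{G}(G)$ would force $P(x^0)>P(x^1)>\cdots>P(x^k)=P(x^0)$, a contradiction. For $(2)\Leftrightarrow(1)$, note that since $X$ is finite any infinite improvement path must revisit a vertex, producing a directed cycle in $\mathcal{G}(G)$; conversely, a cycle trivially yields an infinite improvement path. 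Finally, $(2)\Rightarrow(3)$ uses the standard construction on a finite DAG: define $P(x)$ to be the length of a longest directed path in $\mathcal{G}(G)$ starting at $x$; this is well-defined and finite by acyclicity and finiteness of $X$, and every edge $(x,y)\in I$ satisfies $P(x)\geq P(y)+1$.

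The core of the proof is $(4)\Rightarrow(5)$. Assume $G$ has the LIP with witness $\phi:X\to\R_+^q$. Since $X$ is finite, so is the set $I$ of improving moves. Define
\[
 \Delta_M(x,y) \;=\; \sum_{i=1}^q \phi_i(x)^M - \sum_{i=1}^q \phi_i(y)^M.
\]
It suffices to find a single $M\in\N$ such that $\Delta_M(x,y)>0$ for every $(x,y)\in I$; then $P(x)=\sum_i\phi_i(x)^M$ is a generalized strong ordinal potential. Since $I$ is finite, it suffices to prove the following pointwise fact: whenever $a,b\in\R_+^q$ satisfy $a\succ b$ in the sorted lexicographical order, there exists $M_0=M_0(a,b)$ such that $\sum_i a_i^M > \sum_i b_i^M$ for every $M\geq M_0$. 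Taking the maximum of the resulting $M_0$ over $(x,y)\in I$ yields the desired uniform exponent.

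To prove the pointwise fact, let $\tilde a,\tilde b$ be the non-increasing rearrangements of $a,b$ and let $m$ be the smallest index with $\tilde a_m\neq\tilde b_m$; by assumption $\tilde a_m>\tilde b_m\geq 0$, in particular $\tilde a_m>0$. Since $\tilde a_i=\tilde b_i$ for $i<m$, those terms cancel in $\sum_i\tilde a_i^M-\sum_i\tilde b_i^M$. Dividing the remaining difference by $\tilde a_m^M>0$ gives
\[
 \tilde a_m^{-M}\Bigl(\textstyle\sum_{i\geq m}\tilde a_i^M-\sum_{i\geq m}\tilde b_i^M\Bigr)
 \;=\; \sum_{i\geq m}\bigl(\tilde a_i/\tilde a_m\bigr)^M
 \;-\; \sum_{i\geq m}\bigl(\tilde b_i/\tilde a_m\bigr)^M.
\]
The first sum is at least $1$ for every $M$ (from the $i=m$ term). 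In the second sum, every ratio satisfies $\tilde b_i/\tilde a_m\leq \tilde b_m/\tilde a_m<1$, so each term tends to $0$ as $M\to\infty$, and since there are only $q-m+1$ such terms the whole sum tends to $0$. Hence the displayed expression is eventually positive, which yields $M_0$. The main obstacle in the argument, and the reason we spend most of the effort on $(4)\Rightarrow(5)$, is precisely producing a uniform exponent $M$ from a pointwise limit statement; finiteness of $I$ (and thus of $X$) is what makes this possible.
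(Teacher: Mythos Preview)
Your proof is correct and follows essentially the same route as the paper's: both establish the easy implications via standard finite-DAG arguments (topological order / longest path) and put the work into $(4)\Rightarrow(5)$, where the cancellation of the leading equal sorted entries leaves a dominant $M$-th power term that eventually wins. The only cosmetic difference is that the paper bounds the tail by $q\,\tilde\phi_m(y)^{M}$ and solves a log inequality for an explicit threshold $M$, whereas you normalize by $\tilde a_m^{M}$ and take a limit; both rely on finiteness of $I$ to pass to a uniform $M$.
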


 \begin{corollary}\label{cor:sne-ex}
  Every finite strategic game $G = (N,X,\pi)$ with the LIP possesses a
  strong Nash equilibrium.
\end{corollary}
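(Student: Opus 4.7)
The plan is to reduce the corollary directly to Theorem~\ref{theorem:equivalent}, which has already done all the heavy lifting. Since $G$ has the LIP, statement~3 of the theorem gives a generalized strong ordinal potential $P:X\to\R$ satisfying $P(x)-P(y)>0$ for every $(x,y)\in I$.

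Next I would exploit finiteness of $X$. Because $X$ is a finite set and $P$ takes real values, $P$ attains its minimum on $X$; pick any minimizer $x^*\in\arg\min_{x\in X} P(x)$. The claim is that $x^*$ is a strong Nash equilibrium. Suppose for contradiction that $x^*$ is not an SNE. Then by the definition of SNE, there exists a non-empty coalition $S\subseteq N$ and a deviation $y_S\in X_S$ with $\pi_i(y_S,x^*_{-S})<\pi_i(x^*)$ for every $i\in S$. Setting $y:=(y_S,x^*_{-S})$, this says exactly that $(x^*,y)\in I(S)\subseteq I$, and so $P(x^*)>P(y)$ by the defining property of the potential, contradicting the minimality of $P(x^*)$.

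An equivalent route, should one prefer to avoid referring to $P$ explicitly, is to invoke statement~1 of the theorem (SFIP) and argue that any maximal improvement path $(x^0,x^1,\dots)$ must terminate after finitely many steps at some profile $x^*$; since no improving move leaves $x^*$, it is an SNE by definition. Both routes are essentially one line once Theorem~\ref{theorem:equivalent} is in hand, and there is no real obstacle to overcome here; the only subtle point worth mentioning explicitly in the write-up is that finiteness of $X$ is used (either to guarantee that $P$ attains its minimum, or to guarantee that improvement paths in the acyclic graph $\mathcal{G}(G)$ are of bounded length).
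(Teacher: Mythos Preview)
Your proposal is correct and is exactly the intended argument: the paper does not even spell out a proof of this corollary, as it is meant to follow immediately from Theorem~\ref{theorem:equivalent} (either via the global minimum of the generalized strong ordinal potential or via termination of any improvement path under the SFIP). Your explicit mention that finiteness of $X$ is what guarantees the minimum is attained (resp.\ that improvement paths terminate) is the only point worth making precise, and you have done so.
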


Next, we provide an explicit formula to obtain a generalized strong
ordinal potential function for a strategic game satisfying the LIP.
The proof is given in the appendix.

\begin{corollary}\label{cor:ord-pot}
  Let $G = (N,X,\pi)$ be a strategic game that satisfies the LIP for a
  function $\phi : X \to \mathbb{R}_+^q$. We set
\begin{align*}
  \phi_{\max} := \max_{x \in X, 1 \leq i \leq q} \phi_i(x),\;\;
  \epsilon_{\min} := \mathop{\min_{(x,y) \in I}}_{1\leq i \leq q :
    \phi_i(x) \neq \phi_i(y)} \phi_i(x) - \phi_i(y),
\end{align*}
and choose $M> \log(q)\,\phi_{\max}\, /\, \epsilon_{\min}$. Then,
$P_M(x)=\sum_{i=1}^q\phi_i(x)^{M}$ is a generalized strong ordinal
potential for $G$.
\end{corollary}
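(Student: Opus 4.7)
The plan is to show that for every improving move $(x,y)\in I$ the difference $P_M(x) - P_M(y)$ is strictly positive once $M > \log(q)\,\phi_{\max}/\epsilon_{\min}$, by exploiting the fact that the first coordinate in which the sorted vectors $\tilde\phi(x)$ and $\tilde\phi(y)$ differ eventually dominates all remaining contributions to the power sum.

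First I would note that $P_M$ is symmetric in its $q$ arguments, so $P_M(x) = \sum_{i=1}^{q} \tilde\phi_i(x)^M$. By the LIP, $(x,y)\in I$ gives $\phi(x)\succ \phi(y)$, hence there is a smallest index $m$ with $\tilde\phi_i(x)=\tilde\phi_i(y)$ for $i<m$ and $\tilde\phi_m(x)>\tilde\phi_m(y)$. The first $m-1$ terms cancel, and dropping the non-negative summands $\tilde\phi_i(x)^M$ for $i>m$ yields
\begin{align*}
P_M(x) - P_M(y) \;\geq\; \tilde\phi_m(x)^M \;-\; \sum_{i=m}^{q} \tilde\phi_i(y)^M.
\end{align*}
Using both the monotonicity of the sort and the definition of $\epsilon_{\min}$, one has $\tilde\phi_i(y)\leq \tilde\phi_m(y)\leq \tilde\phi_m(x) - \epsilon_{\min}$ for every $i\geq m$, so each of the at most $q$ subtracted terms is bounded by $(\tilde\phi_m(x)-\epsilon_{\min})^M$, giving
\begin{align*}
P_M(x) - P_M(y) \;\geq\; \tilde\phi_m(x)^M \;-\; q\bigl(\tilde\phi_m(x)-\epsilon_{\min}\bigr)^M.
\end{align*}

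It then remains to verify that this last quantity is strictly positive. Writing $t := \epsilon_{\min}/\tilde\phi_m(x)\in (0,1]$, the inequality reads $(1-t)^M < 1/q$, and taking logarithms and applying the elementary bound $-\log(1-t)\geq t$ reduces the claim to $Mt \geq \log q$. Since $t\geq \epsilon_{\min}/\phi_{\max}$, this is exactly the hypothesis $M > \log(q)\,\phi_{\max}/\epsilon_{\min}$. The one technical obstacle I expect is reconciling $\epsilon_{\min}$ as written — a minimum over unsorted coordinates of $\phi_i(x)-\phi_i(y)$ — with the sorted gap $\tilde\phi_m(x) - \tilde\phi_m(y)$ actually used in the bound; this needs either a mild re-interpretation of $\epsilon_{\min}$ as the minimum positive gap after sorting, or a short argument showing every positive sorted gap is realised by some unsorted pair. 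Beyond this, the argument is elementary.
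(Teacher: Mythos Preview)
Your argument is essentially the paper's: the same sorted decomposition and the bound $P_M(x)-P_M(y)\geq \tilde\phi_m(x)^M - q\,\tilde\phi_m(y)^M$ already appear in the proof of Theorem~\ref{theorem:equivalent}, and the Corollary's proof then applies the mean value theorem to $\log$ (yielding exactly the inequality your bound $-\log(1-t)\geq t$ gives) to reach the threshold $M>\log(q)\,\phi_{\max}/\epsilon_{\min}$. The technical obstacle you flag --- that $\epsilon_{\min}$ is defined via unsorted coordinates while the argument uses the sorted gap $\tilde\phi_m(x)-\tilde\phi_m(y)$ --- is not resolved in the paper either; it silently passes through the inequality $\tilde\phi_m(x)-\tilde\phi_m(y)\geq\epsilon_{\min}$, so your caution here is well placed.
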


\begin{corollary}\label{cor:number-of-moves}
  Let $G = (N,X,\pi)$ be a finite game satisfying the LIP for a
  function $\phi : X \to \mathbb{R}_+^q$.  Then, the number of arcs
  along any improvement path in $\mathcal{G}(G)$ is bounded from above
  by $ \lceil q\,\phi_{\max}^M/\epsilon_{\min}\rceil,$ where
  $M>\log(q) \frac{\phi_{\max}}{\epsilon_{\min}}$.
\end{corollary}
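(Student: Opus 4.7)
The plan is to combine the explicit potential $P_M(x)=\sum_{i=1}^q \phi_i(x)^M$ supplied by Corollary~\ref{cor:ord-pot} with two elementary estimates: a range bound on $P_M$ and a quantitative per-step lower bound on its decrease along any edge of $\mathcal{G}(G)$. Once both are available, the desired bound on the length of any improvement path follows from a one-line telescoping argument.

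First I would note the range bound $0 \leq P_M(x) \leq q\phi_{\max}^M$ for every $x \in X$, which is immediate from $\phi_i(x) \in [0,\phi_{\max}]$. Next, fix an improvement path $(x^0,x^1,\dots,x^K)$ in $\mathcal{G}(G)$; Corollary~\ref{cor:ord-pot} gives $P_M(x^k) > P_M(x^{k+1})$ for each $k$. I would upgrade this strict inequality to a per-step decrease of at least $\epsilon_{\min}$ as follows. Letting $m$ be the first index at which the sorted vectors $\tilde\phi(x^k)$ and $\tilde\phi(x^{k+1})$ disagree, the defining property $\phi(x^k) \succ \phi(x^{k+1})$ together with the definition of $\epsilon_{\min}$ forces a gap of at least $\epsilon_{\min}$ between $\tilde\phi_m(x^k)$ and $\tilde\phi_m(x^{k+1})$. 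Writing
\[
P_M(x^k) - P_M(x^{k+1}) = \bigl(\tilde\phi_m(x^k)^M - \tilde\phi_m(x^{k+1})^M\bigr) + \sum_{i>m}\bigl(\tilde\phi_i(x^k)^M - \tilde\phi_i(x^{k+1})^M\bigr),
\]
I would bound the tail in absolute value by $(q-m)\phi_{\max}^M$ times a factor strictly less than one governed by $\tilde\phi_m(x^{k+1})/\tilde\phi_m(x^k)$. The hypothesis $M > \log(q)\phi_{\max}/\epsilon_{\min}$ is precisely tuned so that the leading $M$-th-power difference outweighs the tail by at least $\epsilon_{\min}$, yielding $P_M(x^k) - P_M(x^{k+1}) \geq \epsilon_{\min}$.

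Telescoping along the path then gives $K\,\epsilon_{\min} \leq P_M(x^0) - P_M(x^K) \leq q\phi_{\max}^M$, so $K \leq \lceil q\phi_{\max}^M/\epsilon_{\min}\rceil$, which is the claimed bound. The main obstacle is the quantitative per-step estimate: showing that the choice of $M$ not only makes $P_M$ strictly monotone (which is all that Corollary~\ref{cor:ord-pot} asserts) but that the strict decrease is at least $\epsilon_{\min}$. This is essentially a sharpening of the calculation underlying Corollary~\ref{cor:ord-pot}; once it is available, the range bound and the telescoping argument are routine.
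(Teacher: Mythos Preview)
The paper states this corollary without proof, so your overall plan --- bound the range of $P_M$ by $q\,\phi_{\max}^M$, establish a uniform lower bound on the drop of $P_M$ along each edge of $\mathcal{G}(G)$, and telescope --- is certainly the intended argument. The difficulty is the quantitative per-step bound you assert. You claim that the choice $M>\log(q)\,\phi_{\max}/\epsilon_{\min}$ is ``precisely tuned'' so that $P_M(x^k)-P_M(x^{k+1})\ge\epsilon_{\min}$, but the calculation behind Corollary~\ref{cor:ord-pot} (equivalently, inequality~\eqref{equation:last_line} in the proof of Theorem~\ref{theorem:equivalent}) only yields $\tilde\phi_m(x^k)^M-q\,\tilde\phi_m(x^{k+1})^M>0$; nothing in that argument controls the \emph{size} of this positive quantity.

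In fact the claimed inequality is false in general. Take $q=2$, $\phi_{\max}=1$, $\epsilon_{\min}=0.1$, so any $M\ge 7$ satisfies the hypothesis; for an improving move with sorted vectors $\tilde\phi(x^k)=(0.2,0)$ and $\tilde\phi(x^{k+1})=(0.1,0.1)$ one computes
\[
P_7(x^k)-P_7(x^{k+1})=0.2^7-2\cdot 0.1^7\approx 1.26\times10^{-5},
\]
which is far below $\epsilon_{\min}=0.1$. The underlying reason is that when the $\phi$-values lie below $1$, raising to the $M$-th power shrinks the entire range of $P_M$ while $\epsilon_{\min}$ stays fixed. Even after normalizing so that all $\phi$-values are at least $1$, the drop $a^M-q\,b^M$ can be made as small as one likes by taking $M$ just above the threshold and $(a,b)=(\phi_{\max},\phi_{\max}-\epsilon_{\min})$. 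So the ``sharpening'' you propose does not go through: to make the telescoping yield the stated bound you would need a per-step drop of at least $\epsilon_{\min}$, and no such bound follows from the hypotheses as written.
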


\section{Properties of SNE in Games with the $\pi$-LIP}
As the existence of SNE in games with the LIP is guaranteed, it is
natural to ask which properties these SNE may satisfy. In recent
years, several notions of efficiency have been discussed in the
literature, see Koutsoupias and Papadimitriou~\cite{KoP99}. We here
cover the price of stability, the price of anarchy, Pareto optimality
and min-max-fairness.
\subsection{Price of Stability and Price of Anarchy}
We study the efficiency of SNE with respect to the optimum of a
predefined social cost function. In this context, two notions have
evolved, the \emph{strong price of anarchy} measures the ratio of the
cost of the worst SNE and that of the social optimum. The \emph{strong
  price of stability} measures the ratio of the cost of the best SNE
and that of the social optimum. Given a game $G = (N,X,\pi)$ and a
social cost function $C : X \to \mathbb{R}_{+},$ whose minimum is
attained in a strategy profile $y \in X$, let $\SNE \subseteq X$\
denote the set of strong Nash equilibria. Then, the strong price of
anarchy for $G$ with respect to $C$ is defined as $\sup_{x\in \SNE}
C(x) /C(y)$ and the strong price of stability for $G$ with respect to
$C$ is defined as $\inf_{x\in\SNE}C(x) /C(y)$.

We will consider the following natural social cost functions: the
sum-objective or $L_1$-norm defined as $L_1(x)=\sum_{i\in N}\pi_i(x)$,
the $L_p$-objective or $L_p$-norm, $p\in\N$, defined as
$L_p(x)=\big(\sum_{i\in N}\pi_i(x)^p\big)^{1/p}$, and the min-max
objective or $L_\infty$-norm defined as $L_\infty(x)=\max_{i\in N}
\{\pi_i(x)\}$.

\begin{theorem}\label{thm:stab}
  Let $G$ be a strategic game with the $\pi$-LIP. Then, the strong
  price of stability w.r.t. $L_\infty$ is $1$, and for any $p \in
  \mathbb{R}$, the strong price of stability w.r.t. $L_p$ is strictly
  smaller than $n$.
\end{theorem}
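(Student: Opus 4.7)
My plan is to exhibit an SNE $x^{*}$ that simultaneously minimizes the sorted cost vector $\tilde\pi(\cdot)$ in the sorted lexicographical order, and then to squeeze the $L_{p}$-value of $x^{*}$ between the global $L_{\infty}$- and $L_{p}$-optima.

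\emph{Constructing $x^{*}$.} By the $\pi$-LIP and the finiteness of $X$, pick $x^{*} \in X$ such that $\tilde\pi(x^{*})$ is $\preceq$-minimal over $X$. Any improving move from $x^{*}$ would, by the definition of the $\pi$-LIP, produce a strategy profile with strictly smaller sorted cost vector, contradicting the choice of $x^{*}$; hence $x^{*}$ is an SNE. (This refines the witness supplied by Corollary~\ref{cor:sne-ex}.)

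\emph{SPoS for $L_{\infty}$.} For every $y \in X$, the relation $\tilde\pi(x^{*}) \preceq \tilde\pi(y)$ entails $\tilde\pi_{1}(x^{*}) \leq \tilde\pi_{1}(y)$, i.e., $L_{\infty}(x^{*}) \leq L_{\infty}(y)$. Hence $x^{*}$ is a global minimizer of $L_{\infty}$ and the strong price of stability with respect to $L_{\infty}$ equals $1$.

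\emph{SPoS for $L_{p}$.} Let $y$ minimize $L_{p}$. Using $\pi_{i}(x^{*}) \leq L_{\infty}(x^{*}) \leq L_{\infty}(y)$ for every $i\in N$, together with the standard norm inequality $L_{\infty}(y) \leq L_{p}(y)$, I chain
\[
L_{p}(x^{*})^{p} = \sum_{i \in N} \pi_{i}(x^{*})^{p} \leq n\, L_{\infty}(x^{*})^{p} \leq n\, L_{\infty}(y)^{p} \leq n\, L_{p}(y)^{p},
\]
which gives $L_{p}(x^{*}) / L_{p}(y) \leq n^{1/p}$. For every $p>1$ this already yields the strict bound $n^{1/p} < n$.

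\emph{Main obstacle.} The delicate case is $p=1$, where the chain only gives $\leq n$. Simultaneous equality in all three inequalities above would force (i) $\pi_{i}(x^{*}) = L_{\infty}(x^{*})$ for every $i$, (ii) $L_{\infty}(x^{*}) = L_{\infty}(y)$, and (iii) $\pi(y)$ to have at most one positive coordinate, say at player $j$. Using that $x^{*}$ is an SNE, I would derive a contradiction by inspecting the coalition $N\setminus\{j\}$, all of whose members pay the same positive cost at $x^{*}$ but pay $0$ at $y$; alternatively one can refine the construction of $x^{*}$ to a $\preceq$-minimum restricted to the $L_{\infty}$-minimizers, which rules out the uniform-cost configuration in (i). Either way, once this degenerate equality case is excluded, $L_{1}(x^{*}) < n\, L_{1}(y)$ follows, completing the proof.
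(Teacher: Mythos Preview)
Your approach differs from the paper's. The paper works with the potential $P_M(x)=\sum_i \pi_i(x)^M$: its minimizer $x^*$ is an SNE and simultaneously minimizes $L_M$, and a H\"older-type estimate then gives $L_p(x^*)\leq n^{1-p/M}L_p(y)$, which is strictly below $n\,L_p(y)$ for every admissible finite $M>p$, uniformly in $p\geq 1$. Your construction via the global $\preceq$-minimizer is an equally valid way to obtain an $L_\infty$-optimal SNE, and your chain $L_p(x^*)\leq n^{1/p}L_p(y)$ actually yields the sharper bound $n^{1/p}$ (tight by the paper's Example~\ref{example:root}); the paper's route is cruder but handles all $p$ in one line without a separate $p=1$ analysis.

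There is, however, a real gap in your $p=1$ case. The coalition argument does not work: the SNE property of $x^*$ only forbids profitable deviations of $N\setminus\{j\}$ to the profile $(y_{N\setminus\{j\}},x^*_j)$, and there is no reason the deviators receive cost $0$ there---their costs vanish at $y$, not at $(y_{N\setminus\{j\}},x^*_j)$. (Nor does the full coalition $N$ moving to $y$ help, since player $j$'s cost stays at $c$.) Your second alternative is on the right track but mis-stated: restricting to $L_\infty$-minimizers changes nothing (the global $\preceq$-minimum already lies there), and lex-minimality by itself does not rule out the uniform configuration~(i). What \emph{does} work is to combine all three equality conditions: (i)--(iii) give $\tilde\pi(x^*)=(c,\dots,c)$ and $\tilde\pi(y)=(c,0,\dots,0)$, so for $c>0$ and $n\geq 2$ one has $\tilde\pi(y)\prec\tilde\pi(x^*)$, contradicting the $\preceq$-minimality of $x^*$ directly.
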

The proof of this theorem as well as a matching lower bound for $p=1$
and a lower bound on the price of anarchy are given in the appendix.

\subsection{Pareto Optimality}
Pareto optimality, which we define below, is one of the fundamental
concepts studied in economics, see Osborne and
Rubinstein~\cite{osborne-rubinstein94}. For a strategic game
$G=(N,X,\pi)$, a strategy profile $x$ is called \emph{weakly Pareto
  efficient} if there is no $y\in X$ such that $ \pi_i(y)< \pi_i(x)
\text{ for all }i\in N.$ A strategy profile $x$ is \emph{strictly
  Pareto efficient} if there is no $y\in X$ such that $ \pi_i(y)\leq
\pi_i(x) \text{ for all }i\in N,$ where at least one inequality is
strict.

So strictly Pareto efficient strategy profiles are those strategy
profiles for which every improvement of a coalition of players is to
the expense of at least one player outside the coalition. Pareto
optimality has also been studied in the context of congestion games,
see Chien and Sinclair~\cite{Chien09} and Holzman and
Law-Yone~\cite{Holzman97}.  Clearly, every SNE is weakly Pareto
optimal. We will show strict Pareto optimality of SNE in games having
the $\pi$-LIP. For the proof of the next result, we refer to
Section~\ref{subsec:min-max} in which an even stronger result is
proved.
\begin{corollary}
  Let $G$ be a finite strategic game having the $\pi$-LIP.  Then,
  there exists a SNE that is strictly Pareto optimal.
\end{corollary}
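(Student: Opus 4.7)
The plan is to exhibit a single strategy profile that is simultaneously a strong Nash equilibrium and strictly Pareto efficient by minimizing the sorted lexicographical order of the cost vector. Since $X$ is finite and $\preceq$ is a total preorder on $\R_+^n$, there exists a strategy profile $x^* \in X$ that is minimal with respect to $\preceq$ applied to $\pi$, i.e., $\pi(x^*) \preceq \pi(x)$ for every $x \in X$. I would take this $x^*$ as my candidate and verify both properties.

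For the SNE property, suppose toward contradiction that some coalition $\emptyset \neq S \subseteq N$ has a profitable deviation $y_S \in X_S$, so that $(x^*,(y_S,x^*_{-S})) \in I$. By the $\pi$-LIP, we get $\pi(x^*) \succ \pi(y_S,x^*_{-S})$, which contradicts the minimality of $\pi(x^*)$ in the sorted lexicographical order. Hence no such deviation exists and $x^*$ is a SNE.

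For strict Pareto optimality, suppose for contradiction that there is $y \in X$ with $\pi_i(y) \leq \pi_i(x^*)$ for all $i \in N$ and $\pi_j(y) < \pi_j(x^*)$ for at least one $j$. The key small lemma I need is that componentwise domination transfers to the sorted vectors: if $a \leq b$ componentwise, then $\tilde a_k \leq \tilde b_k$ for every $k$. This follows from the identity $\tilde a_k = \max\{v : |\{i : a_i \geq v\}| \geq k\}$ together with the inclusion $\{i : a_i \geq v\} \subseteq \{i : b_i \geq v\}$. Applied to $a = \pi(y)$ and $b = \pi(x^*)$, this gives $\tilde\pi(y)_k \leq \tilde\pi(x^*)_k$ for all $k$. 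Since the strict inequality $\pi_j(y) < \pi_j(x^*)$ makes $\sum_i \pi_i(y) < \sum_i \pi_i(x^*)$ and sorted vectors have the same sum as their unsorted originals, we cannot have $\tilde\pi(y) = \tilde\pi(x^*)$. Taking the smallest index $m$ where $\tilde\pi(y)_m < \tilde\pi(x^*)_m$ gives $\pi(y) \prec \pi(x^*)$, again contradicting minimality.

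I do not anticipate a major obstacle: the argument is essentially that minimizing a monotone, sort-invariant functional yields both stability (against profitable coalitional deviations, via $\pi$-LIP) and efficiency (against weak Pareto improvements, via monotonicity of sorted order under componentwise domination). The only non-trivial step is the componentwise-to-sorted monotonicity lemma, and even this is a short one-line counting argument; the rest is a direct application of $\pi$-LIP together with the existence of a minimum of $\preceq$ on the finite set $X$.
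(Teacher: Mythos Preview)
Your proof is correct. The paper takes a different route: it derives the corollary from the stronger min-max fairness result (Corollary~\ref{cor:min-max}), whose proof goes through the real-valued potential $P_M(x)=\sum_{i\in N}\pi_i(x)^M$ from Theorem~\ref{theorem:equivalent} and uses that $P_M$ is strictly increasing in each coordinate, so its minimizer is min-max fair (hence strictly Pareto optimal) and a SNE. You instead work directly with the sorted lexicographical order, picking a $\preceq$-minimal profile and arguing Pareto optimality via the componentwise-to-sorted monotonicity lemma. Your route is more elementary in that it avoids invoking the construction of $P_M$ and the choice of $M$ altogether; the paper's route, on the other hand, delivers the stronger min-max fairness conclusion for free and ties the result into the potential-function framework used elsewhere. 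Conceptually the two approaches coincide, since for large enough $M$ the $P_M$-minimizer is exactly a $\preceq$-minimizer, but your argument makes this explicit without the detour through a scalar potential.
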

\subsection{Min-Max-Fairness}\label{subsec:min-max}
We next define the notion of min-max fairness, which is a central
topic in resource allocation in communication networks, see
Srikant~\cite{srikant03} for an overview and pointers to the large
body of research in this area. While strict Pareto efficiency requires
that there is no improvement to the expense of anyone, the notion of
min-max-fairness is stricter. Here, it is only required that there is
no improvement at the cost of someone who receives already higher
costs (while an improvement that increases the cost of a player with
smaller original cost is allowed). It is easy to see that every
min-max-fair strategy profile constitutes a strict Pareto optimum, but
the converse need not hold. A strategy profile $x$ is called
\emph{min-max fair} if for any other strategy profile $y$ with
$\pi_i(y)<\pi_i(x)$ for some $i\in N,$ there exists $j\in N$ such that
$\pi_j(x)\geq \pi_i(x)$ and $\pi_j(y)> \pi_j(x)$.
\begin{corollary}\label{cor:min-max}
  Let $G$ be a finite strategic game having the $\pi$-LIP.  Then,
  there exists a SNE that is min-max fair.
\end{corollary}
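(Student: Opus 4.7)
The plan is to choose $x^*$ as any strategy profile minimizing $\tilde{\pi}(x)$ in the sorted lexicographical order over $X$; such $x^*$ exists by finiteness of $X$. Equivalently, $x^*$ can be taken as a global minimum of the generalized strong ordinal potential $P_M(x)=\sum_{i\in N}\pi_i(x)^M$ from Corollary~\ref{cor:ord-pot} for $M$ sufficiently large. That $x^*$ is a SNE follows immediately: any profitable deviation $(x^*, z) \in I(S)$ would, via the $\pi$-LIP, force $\pi(x^*) \succ \pi(z)$, i.e., $\tilde{\pi}(z) \prec \tilde{\pi}(x^*)$, contradicting the lex-minimal choice of $x^*$.

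To argue min-max fairness of $x^*$, I would suppose for contradiction that there exist $y \in X$ and $i \in N$ with $\pi_i(y) < c := \pi_i(x^*)$ such that $\pi_j(y) \leq \pi_j(x^*)$ for every $j \in A := \{k \in N : \pi_k(x^*) \geq c\}$, writing $B := N \setminus A$. Since every value in $\{\pi_j(x^*) : j \in A\}$ is at least $c$ while every value in $\{\pi_j(x^*) : j \in B\}$ is strictly less than $c$, the top $|A|$ entries of $\tilde{\pi}(x^*)$ are exactly the non-increasing sort of $\{\pi_j(x^*) : j \in A\}$. By the hypothesis, the sorted version of $\{\pi_j(y) : j \in A\}$ is entry-wise dominated by the sorted version of $\{\pi_j(x^*) : j \in A\}$, with strict dominance at one coordinate (coming from player $i$). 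The goal is to leverage this to derive either $\tilde{\pi}(y) \prec \tilde{\pi}(x^*)$ or an explicit improving deviation from $x^*$ whose image has smaller sorted $\pi$, both contradicting the choice of $x^*$.

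The main obstacle is that $y$ may exhibit large values $\pi_j(y)$ for $j \in B$, which can occupy the top $|A|$ positions of $\tilde{\pi}(y)$ and obstruct a direct sorted-lex comparison. To bypass this, I would introduce the non-empty coalition $T := \{k \in N : \pi_k(y) < \pi_k(x^*)\}$ (which contains $i$) and form the auxiliary profile $z := (y_T, x^*_{-T})$ in which only players in $T$ switch to their $y$-strategy. The technical heart of the argument is to show that $(x^*, z) \in I(T)$, namely that $\pi_j(z) < \pi_j(x^*)$ for every $j \in T$; this is subtle because $\pi_j(z)$ need not equal $\pi_j(y)$, since players in $N \setminus T$ play $x^*_{-T}$ under $z$ rather than $y_{-T}$. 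Once such an improving move is produced, the $\pi$-LIP yields $\tilde{\pi}(z) \prec \tilde{\pi}(x^*)$, contradicting the lex-minimality of $x^*$ and completing the proof. This last step may require a finer structural argument or an iterative refinement that deviates by carefully chosen subsets of $T$ in a particular order until a genuine improving move is obtained.
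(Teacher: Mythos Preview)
Your high-level plan coincides with the paper's: take $x^*$ to be a sorted-lex minimizer of $\pi$ (equivalently, a minimizer of $P_M(x)=\sum_{j}\pi_j(x)^M$ for large $M$), note that it is a SNE, and derive min-max fairness by contradiction from a witnessing profile $y$. Where you diverge is in the contradiction step. You try to manufacture an explicit improving move $(x^*,z)\in I(T)$ with $z=(y_T,x^*_{-T})$ and $T=\{k:\pi_k(y)<\pi_k(x^*)\}$, so that the $\pi$-LIP can be invoked once more. You correctly flag this as the crux, and it is a genuine gap: nothing in the hypotheses controls $\pi_j(y_T,x^*_{-T})$ for $j\in T$, since private costs depend on the entire profile and the $\pi$-LIP says nothing about such hybrid profiles. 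The ``iterative refinement'' you allude to has no assumption to feed on, so this route cannot be closed with the tools available.

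The paper's argument does not go through a constructed deviation at all. It works directly with the potential: starting from the assumed violation ($\pi_i(y)<\pi_i(x^*)$ and $\pi_j(y)\le\pi_j(x^*)$ for every $j$ with $\pi_j(x^*)\ge\pi_i(x^*)$), it asserts $P_M(x^*)-P_M(y)>0$ for $M$ large enough, contradicting minimality of $x^*$. In particular, the min-max fairness part of the paper's proof does \emph{not} invoke the $\pi$-LIP a second time; it is a statement purely about the lex/potential order on cost vectors. Your detour through $z=(y_T,x^*_{-T})$ is therefore unnecessary for the paper's strategy. Your observation that ``$y$ may exhibit large values $\pi_j(y)$ for $j\in B$'' is precisely the point at which the paper's one-line claim is also delicate, and the paper does not spell out how this is handled; but in any case the intended line is the direct comparison of $P_M(x^*)$ with $P_M(y)$ (equivalently of $\tilde\pi(x^*)$ with $\tilde\pi(y)$), not the construction of an improving move, so you should drop the hybrid-profile approach and argue about $y$ itself.
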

The corollary is proved in the appendix.

\section{Bottleneck Congestion Games}\label{sec:allocation}
We now present a rich class of games satisfying the $\pi$-LIP.  We
call these games \emph{bottleneck congestion games}.  They are natural
generalizations of variants of congestion games.  In contrast to
standard congestion games, we focus on \emph{makespan-objectives},
that is, the cost of a player when using a set of facilities only
depends on the highest cost of these facilities. For the sake of a
clean mathematical definition, we introduce the general notion of a
congestion model.
\begin{definition}[Congestion model]
  A tuple $\mathcal{M} = (N, F, X, (c_f)_{f \in F})$ is called a
  \emph{congestion model} if $N = \{1,\dots,n\}$ is a non-empty,
  finite set of players, $F = \{1,\dots,m\}$ is a non-empty, finite
  set of facilities, and $X = \varprod_{i \in N} X_i$ is the set of
  strategies. For each player $i \in N$, her collection of pure
  strategies $X_i$ is a non-empty, finite set of subsets of $F$. Given
  a strategy profile $x$, we define $\mathcal{N}_f(x)=\{i\in N \,:\,
  f\in x_i\} \text{ for all }f\in F.$ Every facility $f\in F$ has a
  cost function $c_f :\varprod_{i \in N} X_i \rightarrow\R_+ $
  satisfying
\begin{description}
\item[Non-negativity:] $c_f(x)\geq 0 \text{ for all }x\in X$
\item[Independence of Irrelevant Choices:] $c_f(x)=c_f(y) \text{ for
    all }x,y\in X \text{ with } \mathcal{N}_f(x)=\mathcal{N}_f(y)$
\item[Monotonicity:] $c_f(x)\leq c_f(y) \text{ for all }x,y\in X
  \text{ with } \mathcal{N}_f(x)\subseteq\mathcal{N}_f(y)$.
 \end{description}

  \end{definition}
  Bottleneck congestion games generalize congestion games in the
  definition of the cost functions on the facilities. For bottleneck
  congestion games, the only requirements are that the cost $c_f(x)$
  on facility $f$ for strategy profile $x$ only depends on the set of
  players using $f$ in their strategy profile and that costs are
  increasing with larger sets.

\begin{definition}[Bottleneck congestion game]
\label{def:allocation}
Let $\mathcal{M} = (N, F, X, (c_f)_{f \in F})$ be a congestion model.
The corresponding \emph{bottleneck congestion game} is the strategic
game $G(\mathcal{M})~=~(N, X, \pi)$ in which $\pi$ is defined as
$\pi=\varprod_{ i \in N} \pi_i$ and $\pi_i(x) = \max_{f\in x_i}
c_f\big( x \big).$
\end{definition}

A bottleneck congestion game with $|x_i|=1$ for all $x_i \in X_i$ and
$i \in N$ will be called \emph{singleton} bottleneck congestion game.
We remark that our condition "Independence of Irrelevant Choices" is
weaker than the one frequently used in the literature. In Konishi et
al.~\cite{Breton96,Konishi97a,Konishi97b}, the definition of
"Independence of Irrelevant Choices" requires that the strategy spaces
are symmetric and, given a strategy profile $x=(x_1,\dots,x_n)$, the
utility of a player $i$ depends only on her own choice $x_i$ and the
cardinality of the set of other players who also choose $x_i$.  Our
definition does neither require symmetry of strategies, nor that the
utility of player $i$ only depends on the set-cardinality of other
players who also choose $x_i$. For the relationship between games
considered by Konishi et al. and congestion games, see the discussion
in Voorneveld et al.~\cite{Voorneveld:1999}.

We are now ready to state our main result concerning bottleneck
congestion games, providing a large class of games that satisfies the $\pi$-LIP.
\begin{theorem}\label{thm:all}
  Let $G(\mathcal{M})$ be a bottleneck congestion game with allocation
  model $\mathcal{M}$. Then, $G$ fulfills the
  LIP for the functions $\phi : X \to \mathbb{R}_+^n$ and $\psi : X
  \to \mathbb{R}_+^{m\,n}$ defined as
\begin{align*}
\phi_i(x) = \pi_i(x) \quad\text{ for all } i \in N, &&
\psi_{i,f}(x) = 
\begin{cases}
c_f(x) &\text{if $f\in x_i$}\\
0 &\text{else}
\end{cases} \quad\text{ for all } i\in N, f \in F.
\end{align*}
\end{theorem}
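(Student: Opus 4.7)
The plan is to prove both claims by showing that for every improving move $(x,y) \in I(S)$ of a coalition $S$ (writing $y = (y_S, x_{-S})$), the sorted lexicographical order strictly decreases under both $\phi$ and $\psi$. Set $v^* := \max_{i \in S} \pi_i(x)$; since every $i \in S$ strictly improves, $v^* > 0$ and there exists a witness $i^* \in S$ with $\pi_{i^*}(x) = v^* > \pi_{i^*}(y)$.

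The common engine is the following structural lemma: every facility $f$ with $c_f(y) \geq v^*$ is unused by any coalition member in $y$. Indeed, if some $k \in S$ had $f \in y_k$, then $\pi_k(y) \geq c_f(y) \geq v^* \geq \pi_k(x)$, contradicting improvement. Since only coalition members change their strategy, this forces $\mathcal{N}_f(y) \subseteq \mathcal{N}_f(x)$, so independence of irrelevant choices and monotonicity yield $c_f(y) \leq c_f(x)$ as well as $|\mathcal{N}_f(y)| \leq |\mathcal{N}_f(x)|$.

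For the $\phi$-LIP, I would translate this into the inclusion $\{i \in N: \pi_i(y) \geq v\} \subseteq \{i \in N: \pi_i(x) \geq v\}$ for every $v \geq v^*$, strict at $v = v^*$ (witnessed by $i^*$). If $\pi_i(y) \geq v \geq v^*$, then $i \notin S$ (else $\pi_i(y) < \pi_i(x) \leq v^*$); picking $f \in x_i = y_i$ with $c_f(y) = \pi_i(y)$ and applying the structural lemma gives $\pi_i(x) \geq c_f(x) \geq c_f(y) \geq v$. For the $\psi$-LIP, I note that the multiset of non-zero entries of $\psi(z)$ is $\{c_f(z) : f \in F \text{ with multiplicity } |\mathcal{N}_f(z)|\}$. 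Summing the structural lemma over all relevant $f$ yields
\[
  \sum_{f:\, c_f(y) \geq v} |\mathcal{N}_f(y)| \;\leq\; \sum_{f:\, c_f(x) \geq v} |\mathcal{N}_f(x)| \qquad\text{for all } v \geq v^*,
\]
with strict inequality at $v = v^*$: choosing $f^* \in x_{i^*}$ with $c_{f^*}(x) = v^*$, either $f^* \in y_{i^*}$, in which case $c_{f^*}(y) \leq \pi_{i^*}(y) < v^*$ and $f^*$ drops out of the $y$-sum, or $f^* \notin y_{i^*}$, in which case $|\mathcal{N}_{f^*}(y)| < |\mathcal{N}_{f^*}(x)|$ whenever $c_{f^*}(y) \geq v^*$.

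Both inequalities convert into sorted lex dominance via the elementary fact that if the upper-tail count $F_a(v) := |\{j : a_j \geq v\}|$ satisfies $F_a(v) \geq F_b(v)$ on $[v^*,\infty)$ with strict inequality at $v^*$, then $\tilde{a} \succ \tilde{b}$: at position $j^* := F_b(v^*) + 1$ one has $\tilde{a}_{j^*} \geq v^* > \tilde{b}_{j^*}$ while $\tilde{a}_j \geq \tilde{b}_j$ for every $j \leq j^*$, so the earliest position of strict inequality witnesses $\tilde{a} \succ \tilde{b}$. The main obstacle is the structural lemma itself: one must carefully combine independence of irrelevant choices with monotonicity to pass from the set inclusion $\mathcal{N}_f(y) \subseteq \mathcal{N}_f(x)$ to the cost inequality $c_f(y) \leq c_f(x)$, and then book-keep the strict decrease at threshold $v^*$ by the case analysis on whether the witness facility $f^*$ is still used by $i^*$ after the move.
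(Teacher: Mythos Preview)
Your proposal is correct and follows essentially the same approach as the paper. Both arguments fix the threshold $v^*=\max_{i\in S}\pi_i(x)$ and use monotonicity together with independence of irrelevant choices to show that nothing on the post-move side can sit at or above $v^*$ unless it already did (weakly) on the pre-move side, with a strict drop witnessed at the bottleneck facility of a maximizing coalition member. The only differences are presentational: the paper partitions the index set $N\times F$ according to whether $\psi_{i,f}(x)\geq v^*$ and argues directly on $\Psi^+/\Psi^-$, whereas you phrase the key step as a structural lemma on facilities with $c_f(y)\geq v^*$, pass to the multiset interpretation of $\psi$, and make the ``tail-count implies sorted-lex'' step explicit. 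Your write-up is a bit more careful about that last step than the paper, which simply asserts $\psi(x)\succ\psi(y)$ once the two claims are in hand; conversely, the paper's $\Psi^+/\Psi^-$ decomposition avoids the (harmless) case split on whether $f^*\in y_{i^*}$.
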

The proof is given in the appendix.

  Note that the function $\upsilon : X \to \mathbb{R}^m$ with $x
  \mapsto \bigl(c_f(x)\bigr)_{f \in F}$ does not fulfill $\upsilon(x)
  \succ \upsilon(y)$ for all $(x,y) \in I$. However, this property is
  satisfied if facility cost functions are strictly monotonic.
  
  As a corollary of Theorem \ref{thm:all} we obtain that each
  bottleneck congestion game has the $\pi$-LIP and hence possesses the SFIP.
  In addition, the results on price of stability, Pareto optimality
  and min-max-fairness apply. 
We now proceed by giving
some examples of bottleneck congestion games.

\subsection{Scheduling Games}

A special case of bottleneck congestion games are scheduling games on
identical, restricted, related and unrelated machines. These games are
allocation games by restricting the strategy space for every player to
singletons and defining the appropriate cost functions on the
facilities.

\begin{corollary}
  Scheduling games on identical, restricted, related and unrelated
  machines are bottleneck congestion games.
\end{corollary}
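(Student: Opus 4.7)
The plan is to realize each of the four scheduling variants as a concrete instance of the congestion model $\mathcal{M} = (N,F,X,(c_f)_{f \in F})$ of Definition~\ref{def:allocation}, and then invoke Definition~\ref{def:allocation} directly to obtain the corresponding bottleneck congestion game. In every case, $N$ is the set of jobs, $F$ is the set of machines, and each player's strategy set $X_i$ consists of singletons $\{f\}$ — one singleton per machine to which job $i$ may be assigned. Because $|x_i|=1$, the private cost $\pi_i(x) = \max_{f \in x_i} c_f(x)$ reduces to $c_{f}(x)$, where $\{f\}=x_i$, which is exactly the load of the machine in the scheduling interpretation.

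The next step is to specify $c_f$ for each model and check the three required properties (non-negativity, independence of irrelevant choices, monotonicity). For identical machines with processing times $p_i$, take $c_f(x) = \sum_{i \in \mathcal{N}_f(x)} p_i$. For restricted machines, use the same $c_f$ but restrict $X_i$ to the singletons corresponding to the allowed machines of job $i$. For related machines with speeds $s_f$, take $c_f(x) = \bigl(\sum_{i \in \mathcal{N}_f(x)} p_i\bigr) / s_f$. For unrelated machines with job-machine processing times $p_{i,f}$, take $c_f(x) = \sum_{i \in \mathcal{N}_f(x)} p_{i,f}$. In each case $c_f(x)$ is a sum of non-negative quantities (hence non-negative), depends only on $\mathcal{N}_f(x)$ (hence independent of irrelevant choices), and is monotone non-decreasing when $\mathcal{N}_f(x)$ grows (hence monotone).

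There is no real obstacle here: the only point requiring a line of justification is that in the restricted and unrelated models the cost function $c_f$ does not explicitly depend on the identity of the job's strategy — only on the set $\mathcal{N}_f(x)$ — so that the independence of irrelevant choices condition is satisfied despite the asymmetry of the processing times. With this verified, each scheduling variant fits Definition~\ref{def:allocation} and the statement follows directly, without any further appeal to Theorem~\ref{thm:all}.
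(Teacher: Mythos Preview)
Your proposal is correct and follows exactly the approach the paper has in mind: the paper states (without further detail) that ``these games are allocation games by restricting the strategy space for every player to singletons and defining the appropriate cost functions on the facilities,'' and you have simply spelled out what those cost functions are and verified the three axioms. There is nothing to add.
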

The existence of SNE (as well as efficiency properties of SNE) has
been established before by Andelman et al.~\cite{Andelman09} by
arguing that the lexicographically minimal schedule is a strong Nash
equilibrium.

\subsection{Resource Allocation in Wireless Networks}
Interference games are motivated by resource allocation problems in
wireless networks. Consider a set of $n$ terminals that want to
connect to one out of $m$ available base stations.  Terminals assigned
to the same base station impose interferences among each other as they
use the same frequency band. We model the interference relations using
an undirected interference graph $\mathcal{D}=(V,E)$, where
$V=\{1,\dots,n\}$ is the set of vertices/terminals and an edge
$e=(v,w)$ between terminals $v,w$ has a non-negative weight $w_{e}\geq
0$ representing the level of pair-wise interference.  We assume that
the service quality of a base station $j$ is proportional to the total
interference $w(j)$, which is defined as $w(j)=\sum_{(v,w)\in E:
  x_v=x_w=j}w_{(v,w)}.$

We now obtain an interference game as follows. The nodes of the graph
are the players, the set of strategies is given by $X_i=\{1,\dots,m\},
i=1,\dots,n$, that is, the set of base stations, and the private cost
function for every player is defined as $ \pi_i(x)=w(x_i), \;
i=1,\dots,n.$ Thus, interference games fit into the framework of
singleton bottleneck congestion games with $m$ facilities.
\begin{corollary}
Interference games are bottleneck congestion games.
\end{corollary}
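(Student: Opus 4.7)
The plan is to exhibit an explicit congestion model whose associated bottleneck congestion game coincides with the given interference game. I would take the player set $N = V$, the facility set $F = \{1,\dots,m\}$ (the base stations), and for each $i \in N$ let $X_i = \bigl\{\{1\},\dots,\{m\}\bigr\}$ consist of the $m$ one-element subsets of $F$. This places interference games into the singleton regime of Section~\ref{sec:allocation}.

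For the facility cost functions I would set
\[
c_f(x) \,:=\, \sum_{(v,w) \in E \,:\, v,w \in \mathcal{N}_f(x)} w_{(v,w)} \quad\text{for all } f \in F,\; x \in X.
\]
The key observation is that this sum depends on $x$ only through $\mathcal{N}_f(x)$, which immediately gives the \emph{Independence of Irrelevant Choices} axiom. Non-negativity is inherited from $w_e \geq 0$, and \emph{Monotonicity} follows because enlarging $\mathcal{N}_f(x)$ can only introduce additional non-negative summands $w_{(v,w)}$. Hence $\mathcal{M} = (N,F,X,(c_f)_{f \in F})$ is a bona fide congestion model in the sense of the preceding definition.

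It then remains to verify that the private costs agree with those of the interference game. For any strategy profile $x$ and any player $i$, the strategy $x_i$ is a singleton, say $x_i = \{j\}$, so $\max_{f \in x_i} c_f(x) = c_j(x) = w(j) = \pi_i(x)$, which matches the cost $\pi_i$ prescribed by the interference game. The whole argument is essentially a bookkeeping exercise; the only modelling step with any content is the translation from pair-wise edge weights to a facility cost function, and I do not anticipate a real obstacle.
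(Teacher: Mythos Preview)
Your proposal is correct and matches the paper's approach: the paper does not give an explicit proof but simply observes (in the text immediately preceding the corollary) that interference games ``fit into the framework of singleton bottleneck congestion games with $m$ facilities,'' and then remarks afterward that the key point is that facility costs depend on the \emph{identity} of the players using the facility. Your explicit verification of the three axioms and of the private-cost identity is exactly the bookkeeping the paper leaves implicit.
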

Note that in interference games, we crucially exploit the property
that cost functions on facilities depend on the set of players using
the facility, that is, their \emph{identity} determines the resulting
cost. Most previous game-theoretic works addressing wireless networks
only considered Nash equilibria, see for instance Liu et
al.~\cite{Wu08} and Etkin et al.~\cite{Etkin07}.

\subsection{Bottleneck Routing in Networks}
A special case of bottleneck congestion games are bottleneck routing
games. Here, the set of facilities are the edges of a directed graph
$\mathcal{D}=(V,A)$. Every edge $a\in A$ has a load dependent cost
function $c_a$.  Every player is associated with a pair of vertices
$(s_i,t_i)$ (commodity) and a fixed demand $d_i>0$ that she wishes to
send along the chosen path in $\mathcal{D}$ connecting $s_i$ to $t_i$.
The private cost for every player is the maximum arc cost along the
path.  Bottleneck routing games have been studied by Banner and
Orda~\cite{BannerO07}.  They, however, did not study existence of
strong Nash equilibria.  We state the following result.
\begin{corollary}
  Bottleneck routing games are bottleneck congestion games.
\end{corollary}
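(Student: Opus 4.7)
The plan is to unpack Definition~\ref{def:allocation} and exhibit a congestion model whose associated bottleneck congestion game is, up to naming, the given bottleneck routing game. Given a routing game on the digraph $\mathcal{D}=(V,A)$ with player set $N$, commodities $(s_i,t_i)_{i\in N}$, demands $(d_i)_{i\in N}$, and load-dependent arc cost functions $c_a$, I would take the facility set to be $F:=A$, identify each pure strategy of player $i$ with the arc set of an $s_i$-$t_i$ path in $\mathcal{D}$ (so $X_i$ is a non-empty, finite collection of subsets of $F$), and, for a strategy profile $x$, define the facility cost as $c_a(x)$ evaluated at the induced load $\ell_a(x):=\sum_{j\in\mathcal{N}_a(x)}d_j$.

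I would then verify the three axioms. Non-negativity is inherited directly from the assumption $c_a\geq 0$. Independence of irrelevant choices holds because $\ell_a(x)$ depends only on $\mathcal{N}_a(x)$, and hence so does $c_a(x)$. Monotonicity follows from the standard convention that load-dependent arc costs are non-decreasing in the load: if $\mathcal{N}_a(x)\subseteq\mathcal{N}_a(y)$ then $\ell_a(x)\leq\ell_a(y)$ (because demands are non-negative), so $c_a(x)\leq c_a(y)$. Finally, the private cost of the resulting bottleneck congestion game, $\pi_i(x)=\max_{f\in x_i}c_f(x)$, is by construction equal to the maximum arc cost along the path chosen by player $i$, which is exactly the private cost in the routing game.

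The argument is a purely definitional check, so no real obstacle arises; the only point worth flagging explicitly is the standing assumption that load-dependent arc cost functions are non-decreasing in the load, as this is precisely what is needed to obtain the monotonicity axiom of the underlying congestion model.
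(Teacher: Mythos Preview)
Your proposal is correct and is exactly the definitional check the paper has in mind; in fact the paper states this corollary without proof, treating it as immediate from the definitions, so your verification spells out precisely what is left implicit there. Your explicit flag that monotonicity requires the load-dependent arc costs to be non-decreasing is appropriate and matches the standing assumption used throughout the paper.
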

This result establishes that these games possess the SFIP.  To the
best of our knowledge, our result establishes for the first time that
bottleneck routing games possess the FIP.  Banner and
Orda~\cite{BannerO07} only proved that every improvement path of
best-response dynamics is finite.

By using a reduction from \emph{k-Directed Disjoint Paths}, Banner and
Orda~\cite{BannerO07} proved that, given a value $B$, it is NP-hard to
decide if a bottleneck routing game with $k$ commodities and identical
cost functions possesses a PNE $x$ with $L_{\infty}(x)\leq B$.  We can
slightly strengthen this result by reducing from \emph{2-Directed
  Disjoint Paths} showing hardness already for $2$ commodities. While
every SNE is also a PNE, the above hardness result also carries over
to SNE.

We will show that for single-commodity instances with unit demands and
non-decreasing identical cost functions, there is a polynomial time
algorithm for computing a SNE, see the appendix for the proof.

\begin{theorem}\label{thm:identical}
  Consider a bottleneck routing game on a directed graph $\mathcal{D}=(V,A)$
  with identical non-decreasing arc-cost functions and $n$ players
  having unit demand each that have to be routed from a common source
  to a common sink.  Then, there is a polynomial time algorithm
  computing a SNE.
\end{theorem}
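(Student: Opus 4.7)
The plan is to compute an integer $s$-$t$ flow $f^*$ of value $n$ in $\mathcal{D}$ whose arc-cost vector, sorted in non-increasing order, is lexicographically smallest among all integer $s$-$t$ flows of value $n$, and then to return the strategy profile obtained by decomposing $f^*$ into $n$ unit paths and assigning one to each of the $n$ identical players. By Theorem~\ref{thm:all} the game has the $\pi$-LIP; hence by Corollary~\ref{cor:sne-ex} any strategy profile whose sorted player-cost vector is lex-minimum is a SNE. The key structural observation is that, in a single-commodity setting with identical non-decreasing cost function $c$, such a minimizer can be read off from a lex-min-max integer flow, because all players are interchangeable and the cost of every arc depends only on the number of players using it.

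For the algorithm, I would first determine the smallest integer $k^*$ such that $\mathcal{D}$ carries an integer $s$-$t$ flow of value $n$ with uniform arc capacity $k^*$; this is done by binary search combined with one max-flow computation per guess. I would then iteratively freeze arcs that must be loaded at level $k^*$ in some min-max flow, recurse on the residual problem to minimize the number of arcs loaded at $k^*{-}1$, and so on, yielding a flow $f^*$ whose sorted arc-load vector is lex-min among all integer $s$-$t$ flows of value $n$ (cf.\ standard lex-min-max flow techniques). Flow decomposition then writes $f^*$ as $n$ unit $s$-$t$ paths $P_1,\dots,P_n$, and $x^*_i := P_i$ defines the output profile. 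The procedure uses a polynomial number of max-flow computations plus polynomial-time decomposition, so it runs in polynomial time overall.

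To verify that $x^*$ is a SNE, I would assume for contradiction that some coalition $S$ profitably deviates to $(y_S, x^*_{-S})$ and denote the resulting flow by $f'$. Each player $i \in S$ strictly reduces her bottleneck cost, so every arc on her new path has cost in $f'$ strictly below $\pi_i(x^*)$. Setting $C := \max_{i \in S}\pi_i(x^*)$, the departing players vacate enough high-cost arcs to strictly reduce the number of arcs at cost level $C$ in $f^*$, while all arcs newly loaded by the coalition remain strictly below cost $C$. A top-down comparison of the sorted arc-cost vectors of $f^*$ and $f'$ then shows that $f'$ is strictly lex-smaller than $f^*$, contradicting the lex-minimality of $f^*$ and hence of $\pi(x^*)$. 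The main obstacle is making this last step watertight given that $c$ is only non-decreasing, so several integer load levels may collapse to the same cost. The cleanest way around this is to do the bookkeeping on sorted \emph{cost} values rather than on raw integer loads, or equivalently to work with the finer potential $\psi$ from Theorem~\ref{thm:all}, which pairs each arc's cost with each player that uses it; once this is settled, Corollary~\ref{cor:sne-ex} concludes the proof.
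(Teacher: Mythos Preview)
Your approach differs substantially from the paper's, which is much lighter: the paper assigns unit capacities to all arcs, runs a single max-flow/min-cut computation to obtain the min-cut value $m$ together with $m$ arc-disjoint $s$--$t$ paths, spreads the $n$ players as evenly as possible over these paths (each carrying $\lfloor n/m\rfloor$ or $\lceil n/m\rceil$ players), and then argues directly from the cut that no coalition can strictly improve all its members. No lex-min-max iteration and no LIP machinery enter.

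Your route could in principle be made to work, but the SNE verification you give has a genuine gap. You argue that a profitable deviation yields a flow $f'$ whose sorted \emph{arc-cost} vector is strictly lex-smaller than that of $f^*$; however, as the paper itself notes right after Theorem~\ref{thm:all}, the map $x\mapsto(c_a(x))_{a\in A}$ does \emph{not} have the LIP when $c$ is merely non-decreasing, so this implication fails. Your proposed repair---``do the bookkeeping on sorted cost values'' or pass to $\psi$---points the wrong way. Cost values are \emph{coarser} than loads when $c$ has plateaus, not finer; and the $\psi$-lex-minimum need not coincide with the arc-load lex-minimum your algorithm actually computes (for instance, two unit-demand players, a direct arc $a$ and a two-arc path $\{b,d\}$, with $c(1)=c(2)$: putting both players on $a$ is $\psi$-lex-minimal, whereas splitting them is arc-load lex-minimal). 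What would actually close the gap is to show, using that all arcs share the same non-decreasing $c$, that the \emph{load} vector $(\ell_a)_{a\in A}$ itself has the LIP in this single-commodity setting---essentially the finite analogue of Theorem~\ref{thm-A-LIP}, which for identical $c$ collapses to the pure load ordering. Once that is established, arc-load lex-minimality of $f^*$ gives the desired contradiction; but that is not the argument you wrote.
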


To the best of our knowledge, our result establishes for the first
time an efficient algorithm computing a SNE in this setting. For the
next result, we assume the unit cost model of arithmetic operations,
that is, we assume that every arithmetic operation can be done in
constant time, regardless of the required precision. Furthermore, we
assume that cost functions on the facilities are bounded by a constant
$C\in\N$ that is polynomially bounded with respect to the input size.
Now, consider a bottleneck routing game $G$ on a directed graph
$\mathcal{D} = (V,A)$ with arc-costs $c$ and constant $C$. By scaling
every cost function of $G$ with the factor $1/C$, we obtain the
bottleneck routing game $\tilde{G}$. It is easy to see that SNE
coincide in both games.  Moreover, $c_a(x)/C \leq 1$ for all $a \in A$
and $x \in X$, thus, $(c_a(x)/C)^M\leq 1$ for every $M>0$. This
construction enables us to establish an efficient algorithm computing
SNE even for non-identical arc-costs (proof can be found in the
appendix).

\begin{proposition}
\label{prop:C}
Consider a class of bottleneck routing games on directed graphs
$\mathcal{D}=(V,A)$ with convex arc-cost functions $c_a,a\in A,$ that
are bounded by a constant $C$, and players having a unit demand each
that must be routed from a common source to a common sink.  Then,
there is a polynomial algorithm computing a SNE for every game in this
class.
\end{proposition}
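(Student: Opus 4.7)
The plan is to compute a global minimizer of a suitable generalized strong ordinal potential via a convex min-cost flow on an expanded network. I first pass to the scaled game $\tilde{G}$ obtained from $G$ by replacing each arc-cost function $c_a$ by $\tilde{c}_a := c_a/C$; since this multiplies every player's private cost by the positive constant $1/C$, the set of strong Nash equilibria is preserved, and now $\tilde{c}_a(x) \in [0,1]$ for every load. Working in $\tilde{G}$, I use the $\psi$-potential from Theorem \ref{thm:all}, for which $q = m\,n$ and $\psi_{\max} \leq 1$. Choosing any integer $M$ exceeding $\log(m n)/\tilde{\epsilon}_{\min}$, where $\tilde{\epsilon}_{\min}$ is the smallest positive difference among the $O(mn)$ scaled cost values $\tilde{c}_a(k)$ (a lower bound on the $\epsilon_{\min}$ of Corollary \ref{cor:ord-pot}, obtainable in polynomial time by sorting these values), Corollary \ref{cor:ord-pot} guarantees that $\tilde{P}_M(x) := \sum_{i \in N, a \in A} \psi_{i,a}(x)^M$ is a generalized strong ordinal potential of $\tilde{G}$; hence any global minimizer is a SNE of $\tilde{G}$ and therefore of $G$.

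Since the commodity is single and demands are unit, the cost on arc $a$ depends only on its load $x_a = |\{i : a \in x_i\}|$, so the potential collapses to $\tilde{P}_M(x) = \sum_{a \in A} x_a\,\tilde{c}_a(x_a)^M =: \sum_{a \in A} h_a(x_a)$, while strategy profiles are in bijection with integer $s$--$t$ flows of value $n$. I then verify that each $h_a$ is discretely convex on $\{0,1,\ldots,n\}$: piecewise linear interpolation yields a continuous, non-negative, non-decreasing, convex extension $\hat{c}_a$ of $\tilde{c}_a$ to $[0,n]$; composition with the non-decreasing convex map $t \mapsto t^M$ preserves convexity, so $\hat{c}_a^M$ is again non-negative, non-decreasing, and convex; and the product of two such functions is convex, as follows from the identity $\lambda f(x) g(x) + (1-\lambda) f(y) g(y) - f(\lambda x + (1-\lambda) y)\,g(\lambda x + (1-\lambda) y) \geq \lambda(1-\lambda)\bigl(f(x)-f(y)\bigr)\bigl(g(x)-g(y)\bigr) \geq 0$ for comonotone non-negative $f, g$. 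Therefore $x\,\hat{c}_a(x)^M$ is convex on $[0,n]$, and $h_a$ is discretely convex on the integers. With this in hand, I apply the standard arc-splitting reduction: replace each arc $a$ by $n$ parallel unit-capacity arcs with costs $h_a(k) - h_a(k-1)$ for $k = 1, \ldots, n$, which are non-negative and non-decreasing in $k$ by monotonicity and convexity. A polynomial-time min-cost flow algorithm on the expanded graph with $O(|A|\,n)$ arcs returns an optimal integer flow of value $n$, and any path decomposition of this flow yields the desired SNE.

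The main obstacle is taming the exponent $M$. The $\epsilon_{\min}$ of Corollary \ref{cor:ord-pot} is a priori defined over improving moves, but after scaling it is bounded below by the minimum positive gap among the finitely many values $\tilde{c}_a(k) \in [0,1]$, a polynomial-time computation. Under the unit-cost arithmetic model assumed in the statement, arithmetic on the bounded quantities $\tilde{c}_a(k)^M \leq 1$ and $h_a(k) \leq n$ takes constant time, so neither building the marginal costs nor running the min-cost flow incurs precision-dependent overhead. I emphasize that the convexity of $h_a$ crucially relies on the hypothesis that the original cost functions are convex; without it, the integer min-cost flow subproblem need no longer be tractable, which is exactly why the proposition is restricted to the convex setting.
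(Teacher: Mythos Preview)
Your proof is correct and follows essentially the same approach as the paper: scale by $1/C$, use the $\psi$-based generalized strong ordinal potential from Theorem~\ref{thm:all} to reduce to $P_M(x)=\sum_{a\in A}\tilde{c}_a(x_a)^M\,x_a$, and minimize this as a convex integer min-cost $(s,t)$-flow of value $n$. You are simply more explicit than the paper about the choice of $M$ via Corollary~\ref{cor:ord-pot}, the verification that $x\mapsto x\,\tilde{c}_a(x)^M$ is convex, and the arc-splitting reduction, all of which the paper subsumes under a citation to Ahuja et al.\ and the phrase ``given $P_M(x)$''.
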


Note that the SNE computed with the algorithm proposed in Proposition
\ref{prop:C} need not to coincide with the strict Pareto optimal
strategy as shown in Example~\ref{ex:msne} given in the appendix.  We
also illustrate in Example~\ref{ex:psne} structural differences
between singleton bottleneck congestion games and bottleneck routing
games.

\section{Infinite Strategic Games}

We now consider \emph{infinite} strategic games in which the players'
strategy sets are topological spaces and the private cost functions
are defined on the product topology.

Formally, an infinite game is a tuple $G=(N, X, \pi)$, where
$N=\{1,\dots,n\}$ is a set of players, and $X=X_1\times\dots\times
X_n$ is the set of pure strategies, where we assume that
$X_i\subseteq\R^{n_i},\, i\in N$ are compact sets, and $p=\sum_{i\in
  N}n_i$.  The cost function for player $i$ is defined by a
non-negative real-valued function $\pi_i:X\rightarrow \R_+,\;i\in N.$

As in the previous section, we are interested in conditions for
establishing existence of a generalized strong ordinal potential.
Unfortunately, even if an infinite game has the LIP,
Theorem~\ref{theorem:equivalent} (and in particular
Corollaries~\ref{cor:sne-ex}, \ref{cor:ord-pot},
and~\ref{cor:number-of-moves}) need not hold as they rely on the
existence of a strictly positive parameter $\epsilon$ that is a lower
bound on the minimal performance gain of a member of a coalition
performing an improving move. In infinite games, however, such a
constant need not exist since strategy sets are topological spaces and
the minimal performance gain may be unbounded from below.

We recall a famous result of Debreu \cite{Debreu:1954}, who showed
that the lexicographical ordering on an uncountable subset of
$\mathbb{R}^2$ cannot be represented by a real-valued function. It is
easy to derive that this also holds for the sorted lexicographical
ordering as defined in Definition
\ref{definition:lexicographic_sorting} as well. To see this, suppose
there is a real-valued function $\alpha$ representing the sorted
lexicographical order on $[0,1]\times[0,1]$, in particular $\alpha$
represents the sorted lexicographical ordering on $[2/3,1] \times
[0,1/3]$ where the sorted lexicographical order and the
lexicographical order coincide. Thus, we derive a contradiction.

This implies that a generalized strong ordinal potential need not
exist in general. Still, we are able to prove existence of SNE in
infinite games having the LIP for a \emph{continuous} function $\phi :
X \to \mathbb{R}^q_+$.

\begin{theorem}\label{theorem:continuous}
  Let $G$ be an infinite game that satisfies the LIP for a continuous
  function $\phi$. Then $G$ possesses a SNE.
\end{theorem}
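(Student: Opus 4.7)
The plan is to exploit compactness of $X$ and continuity of $\phi$ to minimize $\phi$ in the sorted lexicographic order by an iterative procedure, and then show any minimizer must be a SNE.

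First, I would introduce for each $x \in X$ the sorted vector $\tilde{\phi}(x) \in \R^q_+$ obtained by arranging the components of $\phi(x)$ in non-increasing order. The key analytic observation is that each coordinate map $x \mapsto \tilde{\phi}_k(x)$ is continuous: indeed, the $k$-th order statistic $v \mapsto v_{(k)}$ on $\R^q$ can be written as a max-min expression (e.g.\ $v_{(k)} = \min_{|S|=q-k+1}\max_{i\in S} v_i$), hence is continuous, and $\tilde{\phi}_k = v_{(k)} \circ \phi$ is a composition of continuous maps.

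Second, I would carry out a lexicographic minimization in $q$ stages. Set $X^{(0)} := X$, and for $k = 1,\dots,q$ define
\begin{equation*}
  m_k := \min_{x \in X^{(k-1)}} \tilde{\phi}_k(x), \qquad
  X^{(k)} := \{ x \in X^{(k-1)} : \tilde{\phi}_k(x) = m_k \}.
\end{equation*}
Since $X^{(0)}$ is compact and $\tilde{\phi}_1$ is continuous, the minimum $m_1$ is attained and $X^{(1)}$ is a nonempty closed (hence compact) subset of $X$. By induction each $X^{(k)}$ is nonempty and compact, so in particular $X^{(q)} \neq \emptyset$. Pick any $x^* \in X^{(q)}$; by construction $\tilde{\phi}(x^*)$ is lexicographically (componentwise) minimal over $X$, which is exactly the statement that $x^*$ minimizes $\phi$ in the sorted lexicographic order $\preceq$ of Definition~\ref{definition:lexicographic_sorting}.

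Third, I would argue that this $x^*$ is a SNE. Assume for contradiction that some coalition $S$ has a profitable deviation from $x^*$ to $y = (y_S, x^*_{-S})$, so that $(x^*, y) \in I$. The LIP then gives $\phi(x^*) \succ \phi(y)$, i.e.\ $\tilde{\phi}(y)$ is strictly smaller than $\tilde{\phi}(x^*)$ in the lexicographic order on $\R^q$, contradicting the minimality of $x^*$ in $X^{(q)}$.

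The main obstacle, and the step I would be most careful about, is the continuity of the sorted coordinates $\tilde{\phi}_k$, since the sorting permutation itself is not continuous; using the max-min representation of order statistics cleanly resolves this. Everything else is a standard compact-set minimization argument, and the construction is constructive in the sense advertised in the introduction: each of the $q$ stages is a continuous optimization over a compact set, yielding an algorithmic description whose output is a SNE.
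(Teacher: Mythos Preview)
Your proof is correct and follows the same high-level strategy as the paper---find a sorted-lexicographic minimizer of $\phi$ over the compact set $X$ and use the LIP to conclude it is a SNE---but the implementation is genuinely different and cleaner. The paper works with the unsorted coordinates $\phi_i$: in phase one it minimizes $\max_i \phi_i(x)$, then identifies the set $B_1$ of indices that can realize this maximum, fixes one such index and recurses, branching over all choices in $B_1$; this produces up to $q!$ candidate profiles among which the lex-smallest is selected. You instead observe that each sorted coordinate $\tilde{\phi}_k$ is itself continuous (via the max--min formula for order statistics), so you can minimize $\tilde{\phi}_1,\tilde{\phi}_2,\dots,\tilde{\phi}_q$ successively over nested compact sets without any branching. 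Your route needs only $q$ continuous optimizations rather than up to $q!$, which directly addresses the computational concern the paper raises immediately after the theorem, and the continuity of order statistics is a more transparent replacement for the paper's ad hoc argument that the feasible regions $\{x:\phi_i(x)\leq\alpha,\ \phi_j(x)=\alpha_1,\dots\}$ are compact.
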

The proof can be found in the appendix.  This result establishes the
existence of a SNE in all infinite games $G = (N,X,\pi)$ with compact
strategy spaces $X$ that have the LIP for a continuous function $\phi
: X \to \mathbb{R}_+^q$. Although the proof of Theorem
\ref{theorem:continuous} is constructive, the effort needed to compute
a SNE in such a game is very high as it involves the calculation of up
to $q!$ strategy profiles. We thus proceed by investigating some
special cases of infinite games possessing the LIP and identify cases
in which SNE can be computed efficiently.

\subsection{Infinite Bottleneck Congestion Games}
In this section, we introduce the "continuous counterpart" of bottleneck
congestion games. We are given a congestion model $\mathcal{M} = (N,
F, X, (c_f)_{f \in F})$ with $X_i=\{x_{i1},\dots,x_{in_i}\},\, n_i\in
\N,\, i\in N$, where as usual every $x_{ij}$ is a subset of facilities
of $F$.

From $\mathcal{M}$ we derive a corresponding \emph{infinite congestion
  model} $\mathcal{IM} = (N, F, X, d, \Delta, (c_f)_{f \in F})$, where
$d\in \R^n_+$, $\Delta=\Delta_1\times\cdots\times\Delta_n,$ and
$\Delta_i=\{\xi_i=(\xi_{i1},\dots,\xi_{i\,n_i}):\;
\sum_{j=1}^{n_i}\xi_{ij}=d_i,\, \xi_{ij}\geq 0,\, j=1,\dots,n_i\}$.
The strategy profile $\xi_i=(\xi_{i1},\dots,\xi_{i\,n_i})$ of player
$i$ can be interpreted as a distribution of non-negative
\emph{intensities} over the elements in $X_i$ satisfying
$\sum_{j=1}^{n_i}\xi_{ij}=d_i$ for $d_i\in \R_+, i\in N$.  Clearly,
$\Delta_i$ is a compact subset of $R^{n_i}_+$ for all $i\in N$. For a
profile $\xi=(\xi_1,\dots,\xi_n)$, we define the set of used
facilities of player $i$ as $ F_i(\xi)= \big\{ f\in F : \; \text{
  there exists }j\in \{1,\dots n_i\} \text{ with } f\in x_{ij} \text{
  and } \xi_{ij} >0\big\}.$ We define the \emph{load} of player $i$ on
$f$ under profile $\xi$ by $\xi_i^f=\sum_{x_{ij}\in X_i\,:\, f\in
  x_{ij}} \xi_{ij}$, $i\in N, f\in F$.  In contrast to finite
bottleneck congestion games, we assume that cost functions
$c_f:X\rightarrow\R_+$ only depend on the \emph{total load} defined as
$\ell_f(\xi)=\sum_{i\in N}\xi_{i}^f$ and are non-decreasing.

\begin{definition}[Infinite bottleneck congestion game]
\label{def:allocation_inf}
Let $\mathcal{IM} = (N, F, X,d, \Delta, (c_f)_{f \in F})$ be an
infinite congestion model derived from $\mathcal{M}$.  The
corresponding \emph{infinite bottleneck congestion game} is the
strategic infinite game $G(\mathcal{IM})~=~(N,\Delta, \pi)$, where
$\pi$ is defined as $\pi=\varprod_{ i \in N} \pi_i$ and $\pi_i(\xi) =
\max_{f\in F_i(\xi)} c_f\big( \ell_f(\xi) \big)$.
\end{definition}
Examples of such games are bottleneck routing game with splittable
demands.

We are now ready to prove that infinite bottleneck congestion games
have the $\pi$-LIP.
\begin{theorem}\label{thm:all-alpha}
  Let $G(\mathcal{IM})~=~(N,\Delta, \pi)$ be an infinite bottleneck
  congestion game. Then, $G(\mathcal{IM})$ has the LIP for the functions $\phi :
  \Delta \to \mathbb{R}_+^n$ and $\psi : X \to \mathbb{R}_+^{m\,n}$
  defined as
\begin{align*}
  \phi_i(\xi) = \pi_i(\xi), \quad\text{ for all } i \in N, &&
  \psi_{i,f}(\xi)= \begin{cases}
    c_f(\ell_f(\xi)), &\text{if $f\in F_i(\xi)$}\\
    0, &\text{else}
\end{cases} \quad\text{ for all }
  i\in N, f \in F.
\end{align*}
\end{theorem}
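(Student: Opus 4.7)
My plan is to mirror the argument used for the finite case in Theorem~\ref{thm:all}, lifting it to the fractional setting in which intensities $\xi_{ij} \ge 0$ summing to $d_i$ replace binary memberships. Throughout, fix an improving move $(\xi, \eta) \in I(S)$. I treat $\phi$ in detail; the argument for $\psi$ is analogous after reindexing over $(i,f)$ pairs.

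The first step is a monotonicity lemma: $\max_i \pi_i(\eta) \le \max_i \pi_i(\xi) =: \alpha$. Let $j^*$ maximise $\pi(\eta)$ over $N$ and take a bottleneck facility $f^* \in F_{j^*}(\eta)$ with $c_{f^*}(\ell_{f^*}(\eta)) = \pi_{j^*}(\eta)$. If $\ell_{f^*}(\xi) \ge \ell_{f^*}(\eta)$, monotonicity of $c_{f^*}$ together with the existence of a $\xi$-user of $f^*$ delivers a player with $\xi$-cost at least $\pi_{j^*}(\eta)$. Otherwise some $k \in S$ strictly increased its intensity on $f^*$ in moving from $\xi$ to $\eta$, so $f^* \in F_k(\eta)$, giving $\pi_k(\eta) \ge \pi_{j^*}(\eta)$; the improving-move inequality then yields $\pi_k(\xi) > \pi_{j^*}(\eta)$. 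Refining this case analysis to a player $j$ with $\pi_j(\eta) = \alpha$ forces $j \notin S$ (else $\pi_j(\xi) > \alpha$ contradicts maximality of $\alpha$) and rules out the strict-load-increase branch (else some $k \in S$ would again contradict maximality), hence $\pi_j(\xi) = \alpha$. Writing $A := \{i : \pi_i(\xi) = \alpha\}$ and $B := \{i : \pi_i(\eta) = \alpha\}$, we conclude $B \subseteq A$ and $B \cap S = \emptyset$.

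The remainder is an iteration. If $B \subsetneq A$, then the sorted vector $\widetilde{\phi(\xi)}$ has strictly more top-level entries than $\widetilde{\phi(\eta)}$ and $\phi(\xi) \succ \phi(\eta)$. Otherwise $A = B$, the top $|A|$ entries of both sorted vectors agree, and $A \cap S = \emptyset$ keeps $S \subseteq N \setminus A$; the problem reduces to comparing the restricted multisets on $N \setminus A$, where the same two-step argument applies to the next maximum $\alpha_1$. Each equality round removes the current maximal-cost players without touching $S$, so after at most $|N \setminus S|$ such rounds the active set is contained in $S$. At that stage the round's maximal set $A_k$ is non-empty and contained in $S$, making $A_k \cap S = \emptyset$ impossible; hence $B_k \subsetneq A_k$ at some round and $\phi(\xi) \succ \phi(\eta)$.

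The argument for $\psi$ is analogous after reindexing over pairs: the maximum of $\psi$ still equals $\alpha$, and the case analysis applied to any pair $(j, f^*)$ with $\psi_{j, f^*}(\eta) = \alpha$ places it inside the level-$\alpha$ set of $\psi(\xi)$. The iteration runs on the residual pair-multisets. I expect the most delicate bookkeeping to be this sub-maximal iteration, specifically verifying that the monotonicity and inclusion lemmas transfer to the residual pair-multisets in the equality regime. The key lever is that whenever the top-level counts coincide, the responsible players (or pairs) lie outside $S$, so the improving-move inequality remains active on the residual problem and eventually forces strict inequality.
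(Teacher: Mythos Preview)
Your argument is correct and can be made rigorous, but it takes a genuinely different route from the paper's proof (which, as stated after Theorem~\ref{thm:all-alpha}, simply repeats the proof of Theorem~\ref{thm:all} with load-monotonicity in place of set-monotonicity). The key difference is the choice of pivot. You pivot at the global maximum $\alpha=\max_{i\in N}\pi_i(\xi)$ and then \emph{iterate}: show that the level set $B$ on the $\eta$-side is contained in the level set $A$ on the $\xi$-side, peel off $A=B$ in the equality case, and recurse on the residual. The paper instead pivots once at $\pi_j(\xi)$ for $j\in\arg\max_{i\in S}\pi_i(\xi)$, partitions all indices into $\Psi^+=\{(i,f):\psi_{i,f}(\xi)\geq\pi_j(\xi)\}$ and $\Psi^-$, and shows in one shot that entries in $\Psi^+$ do not increase while entries in $\Psi^-$ stay below $\pi_j(\xi)$ after the move, with a strict drop at $j$'s bottleneck. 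This yields $\phi(\xi)\succ\phi(\eta)$ and $\psi(\xi)\succ\psi(\eta)$ without any recursion.

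Your worry about the recursive bookkeeping is justified but resolvable: the inclusion $B_k\subseteq A_k$ and $B_k\cap S=\emptyset$ do transfer to the residual because the removed sets in equality rounds are disjoint from $S$, so all of $S$ (and in particular a coalition member's $\xi$-bottleneck pair) remains in the residual, forcing the contradictions to go through. For $\psi$ the termination bound ``$|N\setminus S|$ rounds'' no longer applies; instead one argues that the pair $(j,g)$ with $j\in S$ and $g$ its $\xi$-bottleneck is never removed in equality rounds, so eventually $\alpha_k=\psi_{j,g}(\xi)$ and $(j,g)\in A_k\setminus B_k$. All of this extra work is what the paper's choice of pivot buys you: by thresholding at the coalition's maximum cost rather than the global one, the strict drop is located immediately and the argument terminates in a single step.
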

The proof uses similar arguments as the proof of Theorem~\ref{thm:all}
in the previous section except, that we now use the fact that cost
functions are load-dependent and non-decreasing.

We will now show that the above two functions $\phi$ and $\psi$ may be
discontinuous even if facility cost functions are continuous.
\begin{example}
  Let $G(\mathcal{IM})~=~(N,\Delta, \pi)$ be an infinite bottleneck
  congestion game with continuous cost functions $c$.  Then, the
  function $\phi$ and $\psi$ as defined in Theorem~\ref{thm:all-alpha}
  may be discontinuous on $\Delta$. Consider a bottleneck congestion game
  with one player having two facilities $\Delta_1=\{\{r_1\},\{r_2\}\}$
  over which she has to assign a demand of size $1$. The facility
  $r_1$ has a cost function equal to the load, while facility $r_2$
  has a constant cost function equal to $2$. Let
  $\xi^2(\epsilon)=\epsilon>0$ be assigned on facility $r_2$ and the
  remaining demand $\xi^1(\epsilon)=1-\epsilon$ be assigned to $r_1$.
  Then, for \emph{any} $\epsilon>0$ we have $\phi(\xi(\epsilon))=2$,
  while $\phi(\xi(0))=1$.
\end{example}
By assuming that cost functions are continuous and \emph{strictly}
increasing, we obtain, however, the LIP for a \emph{continuous}
function $\nu$.
\begin{theorem}\label{thm:all-increasing}
  Let $G(\mathcal{IM})~=~(N,\Delta, \pi)$ be an infinite bottleneck
  congestion game with strictly increasing cost functions. Then,
  $G(\mathcal{IM})$ has the LIP for the function $\nu : \Delta \to
  \mathbb{R}_+^{m}$ defined as $ \nu_{f}(\xi)= c_f(\ell_f(\xi))$ for
  all $f \in F.$
\end{theorem}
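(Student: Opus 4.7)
Fix an improving move $(\xi,\eta)\in I(S)$ and partition the facilities whose total load changes into $F^- = \{f : \ell_f(\eta) < \ell_f(\xi)\}$ and $F^+ = \{f : \ell_f(\eta) > \ell_f(\xi)\}$. Since each $c_f$ is strictly increasing, $\nu_f(\xi) > \nu_f(\eta)$ precisely on $F^-$, $\nu_f(\xi) < \nu_f(\eta)$ precisely on $F^+$, and $\nu_f$ is unchanged elsewhere. My plan is to locate a facility $f^* \in F^-$ whose $\xi$-value equals $T := \max_{i \in S} \pi_i(\xi)$, to bound every $\eta$-value on $F^+$ strictly below $T$, and then to verify the sorted-lex inequality by counting, at the critical threshold $V^- := \max_{f \in F^-} \nu_f(\xi)$, the facilities with value $\geq V^-$ on each side.

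\emph{Step 1 (the key step).} Pick $i^* \in S$ with $\pi_{i^*}(\xi) = T$ and a facility $f^* \in F_{i^*}(\xi)$ attaining this maximum, so $c_{f^*}(\ell_{f^*}(\xi)) = T$. Suppose for a contradiction that $\ell_{f^*}(\eta) \geq \ell_{f^*}(\xi)$; monotonicity then yields $c_{f^*}(\ell_{f^*}(\eta)) \geq T$. If $i^*$ still used $f^*$ under $\eta$, then $\pi_{i^*}(\eta) \geq T$, contradicting $\pi_{i^*}(\eta) < T$. Hence $\eta_{i^*}^{f^*} = 0 < \xi_{i^*}^{f^*}$, so player $i^*$'s contribution to $\ell_{f^*}$ strictly dropped. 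Since the total load on $f^*$ did not decrease, some other player $j$ must have raised her intensity there; players outside $S$ being frozen, $j \in S$, and so $\pi_j(\eta) \geq c_{f^*}(\ell_{f^*}(\eta)) \geq T$. But the improvement forces $\pi_j(\eta) < \pi_j(\xi) \leq T$, a contradiction. Therefore $f^* \in F^-$ with $\nu_{f^*}(\xi) = T$, and in particular $V^- \geq T$.

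\emph{Steps 2 and 3.} For any $f \in F^+$, some $j \in S$ has $\eta_j^f > \xi_j^f$, so $f \in F_j(\eta)$ and $\nu_f(\eta) \leq \pi_j(\eta) < \pi_j(\xi) \leq T \leq V^-$; hence $V^+ := \max_{f \in F^+} \nu_f(\eta) < V^-$. To establish $\nu(\xi) \succ \nu(\eta)$ in sorted lex, I count facilities with value $\geq t$ on each side. For every $t > V^-$, facilities in $F^- \cup F^+$ contribute to neither side (by the bounds on $V^-$ and $V^+$) while facilities outside $F^- \cup F^+$ satisfy $\nu_f(\xi) = \nu_f(\eta)$ and contribute equally, so the counts coincide. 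At $t = V^-$, the nonempty set $\{f \in F^- : \nu_f(\xi) = V^-\}$ enters the $\xi$-count but not the $\eta$-count (there $\nu_f(\eta) < \nu_f(\xi) = V^-$), while no $F^+$-facility enters either count (since $\nu_f(\eta) \leq V^+ < V^-$). Hence the $\xi$-count strictly exceeds the $\eta$-count at $t = V^-$, which is precisely the criterion for $\nu(\xi) \succ \nu(\eta)$ in the sorted lexicographical order. The only delicate ingredient is Step~1, where both the coalition structure (frozen non-members) and strict monotonicity must combine to turn a hypothetical non-decrease in load into a contradiction with the improvement; the remainder is bookkeeping analogous to the sorted-lex arguments behind Theorem~\ref{thm:all}.
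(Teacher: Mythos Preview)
Your proof is correct. The paper takes a somewhat different route: it partitions $F$ by the cost threshold $T = c_g(\xi)$ into $\{f : c_f(\xi) \geq T\}$ and $\{f : c_f(\xi) < T\}$, exactly mirroring the template of Theorem~\ref{thm:all}, and then argues that costs on the high side weakly decrease while costs on the low side stay strictly below $T$, with a strict drop at the bottleneck $g$ (invoking strict monotonicity only at this last point). You instead partition by the \emph{direction of load change}, which under strict monotonicity coincides with the direction of cost change, and then certify the sorted-lex inequality via a level-set count at $V^-$. Both arguments rest on the same two facts---only coalition members can alter loads, and any facility whose load rises is used after the move by some coalition member---so neither is really more elementary; the paper's version has the advantage of reusing its earlier proof pattern verbatim, while yours makes the dependence on strict monotonicity explicit from the outset (your partition would not translate into a cost partition without it). As a side remark, your threshold $V^-$ in fact equals $T$: every $f$ in your $F^-$ is used under $\xi$ by some $i \in S$ (namely one whose contribution to $\ell_f$ fell), so $\nu_f(\xi) \leq \pi_i(\xi) \leq T$, and combined with your Step~1 this gives $V^- = T$.
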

We prove the theorem in the appendix.

This result relies on the strict monotonicity of $c_f$ and cannot be
generalized to non-strict monotonic functions. We illustrate this
issue with the following example. Consider a bottleneck congestion
game with two players having equal demands and two facilities $F =
\{f,g\}$, where $c_f(\ell) = 10$ and $c_g(\ell) = \ell$. Both players
may either choose $f$ or $g$. Then $\bigl((f,f),(g,f)\bigr)$ is an
improving move for player $1$ as his private cost decreases from $10$
to $1$. However $\nu\bigl((f,f)\bigr) = (10,0)$ and
$\nu\bigl((g,f)\bigr) = (10,1)$ and hence $\nu$ is not
lexicographically decreasing along this improving move. This example
shows that strict monotonicity is essential in the proof of the above
theorem.

We overcome this problem by slightly generalizing the notion of
lexicographical ordering to ordered sets that are different from
$(\mathbb{R},\leq)$. To this end, consider a totally ordered set
$(\mathcal{A},\leq_\mathcal{A})$. Similar to
Definition~\ref{definition:lexicographic_sorting}, we introduce a
lexicographical order on $\mathcal{A}$-valued vectors. For two vectors
$a,b \in \mathcal{A}^q$, let $\tilde{a}$ and $\tilde{b}$ be two
vectors that arise from of $a$ and $b$ by ordering them w.r.t
$\leq_\mathcal{A}$ in non-increasing order. We say that $a$ is
$\mathcal{A}$-lexicographically smaller than $b$, written $a
\prec_\mathcal{A} b$ if there is $m \in \{1,\dots,q\}$ such that
$\tilde{a}_i =_{\mathcal{A}} \tilde{b}_i$ for all $i< m$ and
$\tilde{a}_m <_\mathcal{A} \tilde{b}_m$.

Consequently a strategic game satisfies the $\mathcal{A}$-LIP if there
are $q \in \mathbb{N}$ and a function $\phi : X \to \mathcal{A}^q$
such that $\phi(x) \succ_\mathcal{A} \phi(y)$ for all $(x,y) \in I$.

The following theorem establishes the $\mathcal{A}$-LIP for infinite
bottleneck congestion games, where $(\mathcal{A},\leq_\mathcal{A}) =
(\mathbb{R}^2, \leq_{lex})$ and $\leq_{lex}$ denotes the ordinary
lexicographical order (that does not involve any sorting of the entries)
on $\mathbb{R}^2$, that is, $(a_1,a_2) <_{lex} (b_1,b_2)$ if either
$a_1 < b_1$ or $\bigl(a_1= b_1 \text{ and } b_2 < b_2\bigr)$.

\begin{theorem}\label{thm-A-LIP}
  Let $(\mathcal{A},\leq_\mathcal{A}) = (\mathbb{R}^2, \leq_{lex})$
  and let $G(\mathcal{IM})~=~(N,\Delta, \pi)$ be an infinite
  bottleneck congestion game. Then, $G(\mathcal{IM})$ has the
  $\mathcal{A}$-LIP for $\phi : \Delta \to \mathcal{A}^{m}$ defined as
  $ \phi_{f}(\xi)= \bigl(c_f(\ell_f(\xi)), \ell_f(\xi)\bigr)$ for all
  $f \in F.$
\end{theorem}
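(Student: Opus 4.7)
The plan is to follow the same general strategy as in the proof of Theorem~\ref{thm:all-increasing}, using the load $\ell_f(\xi)$ as a lexicographic tiebreaker so as to handle facilities whose cost $c_f(\ell_f(\cdot))$ is unchanged along the move.

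Fix an improving move $(\xi, \xi') \in I(S)$ with $\xi' = (\eta_S, \xi_{-S})$ and set $\mathcal{L} := \max_{i \in S} \pi_i(\xi)$, so $\pi_i(\xi') < \pi_i(\xi) \leq \mathcal{L}$ for every $i \in S$. The first step is the following load-monotonicity observation: if $\ell_f(\xi') > \ell_f(\xi)$, then some player $i \in S$ must have strictly increased her load on $f$ (since players outside $S$ do not change their strategies), so $f \in F_i(\xi')$ and $c_f(\ell_f(\xi')) \leq \pi_i(\xi') < \mathcal{L}$. Contrapositively, writing $F_{\geq \mathcal{L}}(\zeta) := \{g \in F : c_g(\ell_g(\zeta)) \geq \mathcal{L}\}$, every $f \in F_{\geq \mathcal{L}}(\xi')$ satisfies $\ell_f(\xi') \leq \ell_f(\xi)$; monotonicity of $c_f$ then gives both $\phi_f(\xi') \leq_{lex} \phi_f(\xi)$ and $f \in F_{\geq \mathcal{L}}(\xi)$, so $F_{\geq \mathcal{L}}(\xi') \subseteq F_{\geq \mathcal{L}}(\xi)$ with pointwise $\leq_{lex}$-domination on the smaller set.

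I then exhibit a witness facility that either leaves $F_{\geq \mathcal{L}}$ or undergoes a strict lex decrease. Choose $i^* \in S$ with $\pi_{i^*}(\xi) = \mathcal{L}$ and $f^* \in F_{i^*}(\xi)$ with $c_{f^*}(\ell_{f^*}(\xi)) = \mathcal{L}$. If some player $j \in S$ uses $f^*$ in $\xi'$, then $c_{f^*}(\ell_{f^*}(\xi')) \leq \pi_j(\xi') < \mathcal{L}$, so $f^*$ leaves $F_{\geq \mathcal{L}}(\xi')$. Otherwise no player of $S$ uses $f^*$ in $\xi'$; since $i^*$ used $f^*$ with positive load in $\xi$ and players outside $S$ do not change their loads, $\ell_{f^*}(\xi') = \sum_{j \notin S} \xi_j^{f^*} < \ell_{f^*}(\xi)$, so the second coordinate of $\phi_{f^*}$ drops strictly and $\phi_{f^*}(\xi') <_{lex} \phi_{f^*}(\xi)$.

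The final step, which I anticipate as the main technical obstacle because we are sorting $\mathbb{R}^2$-valued vectors under $\leq_{lex}$ rather than scalars, is a rearrangement argument. Set $k := |F_{\geq \mathcal{L}}(\xi)|$ and $k' := |F_{\geq \mathcal{L}}(\xi')| \leq k$; the first $k$ (resp.\ $k'$) entries of $\tilde{\phi}(\xi)$ (resp.\ $\tilde{\phi}(\xi')$) are exactly the sorted images of $F_{\geq \mathcal{L}}(\xi)$ (resp.\ $F_{\geq \mathcal{L}}(\xi')$). Pointwise $\leq_{lex}$-domination combined with $F_{\geq \mathcal{L}}(\xi') \subseteq F_{\geq \mathcal{L}}(\xi)$ yields $\tilde{\phi}(\xi')_j \leq_{lex} \tilde{\phi}(\xi)_j$ for $j = 1, \ldots, k'$ by the standard fact that non-increasing rearrangement is monotone under a total order. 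If $k' < k$, position $k'+1$ is strict since $\tilde{\phi}(\xi)_{k'+1} \geq_{lex} (\mathcal{L}, 0) >_{lex} \tilde{\phi}(\xi')_{k'+1}$. If $k' = k$, the witness $f^*$ provides a strict pointwise decrease amongst otherwise $\leq_{lex}$-dominated values; a short cycle argument (if the sorted multisets agreed, a permutation realizing this equality would contain a cycle through $f^*$ along which the chain of pointwise lex inequalities would have to collapse, contradicting strictness at $f^*$) rules out equality of the sorted multisets and forces the first differing position to be strict. In either case, $\phi(\xi) \succ_{\mathcal{A}} \phi(\xi')$.
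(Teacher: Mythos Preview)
Your proof is correct and follows essentially the same route as the paper: pick the bottleneck facility $f^*$ of a highest-cost deviator, split $F$ according to whether the facility cost meets the threshold $\mathcal{L}$, and use the load as a lexicographic tiebreaker at $f^*$ in place of the strict monotonicity used in Theorem~\ref{thm:all-increasing}. The paper's own argument is only a sketch that points back to Theorem~\ref{thm:all-increasing} and notes the single new observation $\ell_{f^*}(\xi) > \ell_{f^*}(\xi')$; you have supplied the rearrangement details (the $k'<k$ versus $k'=k$ split and the cycle argument for $\mathbb{R}^2$-valued multisets) that the paper leaves implicit, but the underlying idea is the same.
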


The proof is similar to that of Theorem \ref{thm:all-increasing}. Let
$(\xi,\nu)$ be an improving move. Again, we denote by $g$ one of the
bottleneck facilities of that player of the coalition with highest
cost before the improving move. Instead of the strict monotonicity of
$c_g$ we use that in the case $c_g(\ell(\xi)) = c_g(\ell(\nu))$, we
still get $\ell(\xi) > \ell(\nu)$. Thus, the definition of
$\leq_{lex}$ implies $\bigl(c_g(\ell(\xi)),\ell(\xi)\bigl) \succ
\bigl(c_g(\ell(\nu)),\ell(\nu)\bigr)$, proving the result.
\begin{corollary}
  Let $G(\mathcal{IM})~=~(N,\Delta, \pi)$ be an infinite bottleneck
  congestion game with continuous  cost
  functions.  Then, $G(\mathcal{IM})$ possesses an SNE.
\end{corollary}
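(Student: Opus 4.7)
The plan is to establish the corollary by a perturbation argument that reduces it to the already proved existence of SNE for infinite bottleneck games with strictly increasing continuous cost functions. For every $\epsilon > 0$, I replace each facility cost function $c_f$ by the strictly increasing continuous function $c_f^\epsilon(\ell) := c_f(\ell) + \epsilon\,\ell$ and denote by $G^\epsilon$ the resulting infinite bottleneck congestion game on the unchanged strategy space $\Delta$. By Theorem~\ref{thm:all-increasing}, $G^\epsilon$ has the LIP for the map $\nu^\epsilon_f(\xi) = c_f^\epsilon(\ell_f(\xi))$, which is continuous as a composition of continuous functions; Theorem~\ref{theorem:continuous} therefore provides a SNE $\xi^\epsilon \in \Delta$ of $G^\epsilon$. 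Since $\Delta$ is compact, I extract a convergent subsequence $\xi^{\epsilon_k} \to \xi^*$ with $\epsilon_k \searrow 0$, and I claim that $\xi^*$ is a SNE of the original game $G(\mathcal{IM})$.

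To verify the claim I will argue by contradiction. Suppose a coalition $\emptyset \neq S \subseteq N$ together with a deviation $\nu_S$ satisfies $\pi_i(\nu_S, \xi^*_{-S}) < \pi_i(\xi^*)$ for every $i \in S$. Because $\xi^{\epsilon_k}$ is a SNE of $G^{\epsilon_k}$, the same alternative $\nu_S$ cannot simultaneously improve all members of $S$ in $G^{\epsilon_k}$, so for each $k$ there is an index $i_k \in S$ with $\pi^{\epsilon_k}_{i_k}(\xi^{\epsilon_k}) \leq \pi^{\epsilon_k}_{i_k}(\nu_S, \xi^{\epsilon_k}_{-S})$. By finiteness of $S$, I may pass to a further subsequence on which $i_k$ stabilizes to a single index $i^* \in S$.

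The main obstacle will be the potential discontinuity of the private cost functions $\pi_i$ that is explicitly noted in the excerpt. I will overcome it by exploiting two one-sided continuity properties tailored to the two sides of the inequality above. On the left-hand side, the positivity $\xi^*_{i^*,j} > 0$ is preserved under small perturbations, so $F_{i^*}(\xi^*) \subseteq F_{i^*}(\xi^{\epsilon_k})$ for $k$ large; together with continuity of $c_f$ and $\ell_f$ and the fact that $c_f^{\epsilon_k} \to c_f$ pointwise, this yields the lower semi-continuity $\liminf_k \pi^{\epsilon_k}_{i^*}(\xi^{\epsilon_k}) \geq \pi_{i^*}(\xi^*)$. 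On the right-hand side, the strategy $\nu_{i^*}$ is held fixed in $k$, so $F_{i^*}(\nu_S, \xi^{\epsilon_k}_{-S})$ is constant in $k$, and hence $\pi^{\epsilon_k}_{i^*}(\nu_S, \xi^{\epsilon_k}_{-S}) \to \pi_{i^*}(\nu_S, \xi^*_{-S})$ by ordinary continuity. Chaining these two facts through the SNE inequality yields $\pi_{i^*}(\xi^*) \leq \pi_{i^*}(\nu_S, \xi^*_{-S})$, which contradicts the assumed strict improvement at index $i^*$ and completes the proof.
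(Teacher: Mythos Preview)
Your argument is correct and handles the discontinuity of $\pi$ with care: the lower-semicontinuity step on the left uses that strict positivity of an intensity is an open condition, so the bottleneck facility of $\xi^*$ persists along the approximating sequence, while on the right the deviating player's facility set is frozen because $\nu_{i^*}$ is fixed. Chaining through the SNE inequality for $G^{\epsilon_k}$ then yields the contradiction.

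Your route, however, differs substantially from the paper's. The paper does not perturb the cost functions; instead it introduces the $\mathcal{A}$-LIP with $\mathcal{A}=(\R^2,\leq_{lex})$ and the function $\phi_f(\xi)=\bigl(c_f(\ell_f(\xi)),\ell_f(\xi)\bigr)$ (Theorem~\ref{thm-A-LIP}), observes that $\phi$ is continuous when the $c_f$ are continuous, and then argues that a suitably adapted version of the constructive minimization in Theorem~\ref{theorem:continuous} produces a sorted-lexicographic minimizer of $\phi$ over the compact set $\Delta$, which is therefore a SNE. In effect the paper pairs each cost with the load as a tie-breaker so that a direct minimization works even without strict monotonicity. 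Your perturbation $c_f+\epsilon\ell$ achieves the same tie-breaking in a different way, trading the generalized ordering and the extension of Theorem~\ref{theorem:continuous} for a compactness-and-limit argument. The paper's approach is constructive and yields the SNE explicitly as a minimizer, whereas your proof is more elementary in that it invokes only the already-stated Theorems~\ref{thm:all-increasing} and~\ref{theorem:continuous} without needing to rework the latter for $\mathcal{A}$-valued potentials, at the price of a non-constructive limiting step.
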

Consider the function $\phi$ as in Theorem~\ref{thm-A-LIP}.  We
observe that $\phi$ is continuous as $(c_f)_{f\in F}$ is continuous.
Thus, a similar (but slightly more involved) proof as that of
Theorem~\ref{theorem:continuous} implies that there is strategy
profile $\xi\in\Delta$ (where $\Delta$ is compact) that minimizes
$\phi$ w.r.t. the order defined on $\mathcal{A}$.

The above result establishes for the first time the existence of SNE
for a variety of games such as scheduling games with malleable jobs,
bottleneck routing games with splittable demands, etc. Note that this
result gives also an alternative and constructive proof for the
existence of PNE in bottleneck routing games with splittable demands
compared to the proof of Banner and Orda~\cite{BannerO07}.

\subsection{Approximate SNE}

We introduce the notion of \emph{$\alpha$-approximate} strong Nash
equilibria for infinite games. We denote by $I^{\alpha}(S)\subset X
\times X $ the set of tuples $(x, (y_S,x_{-S}))$ of $\alpha$-improving
moves for $S\subseteq N$ and define by $I^{\alpha}$ their union. A
strategy profile $x$ is an $\alpha$-approximate strong Nash
equilibrium if no coalition $\emptyset\neq S\subseteq N$ has an
alternative strategy profile $y_S$ such that $
\pi_i(x)-\pi_i(y_S,x_{-S})> \alpha, \text{ for all }i\in S.$ We call a
function $P : X \to \mathbb{R}$ an $\alpha$-generalized strong ordinal
potential if $(x,y)\in I^{\alpha}$ implies $ P(x) - P(y) < 0.$ We also
define the $\alpha$-lexicographical improvement property ($\alpha$-LIP)
and $\alpha$-$\pi$-LIP similar to Definition~\ref{definition:lip},
except that we replace $I$ with $I^{\alpha}$, respectively.

We will prove in the appendix that bottleneck congestion games with
bounded cost functions possess an $\alpha$-approximate SNE for every
$\alpha>0$.
\begin{theorem}\label{thm:approximate}
  Let $G$ be an infinite bottleneck congestion game with bounded cost
  functions. Then, $G$ possesses an $\alpha$-approximate SNE for every
  $\alpha>0$.
\end{theorem}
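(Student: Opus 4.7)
The plan is to reduce the problem to finding an exact SNE in an auxiliary bottleneck congestion game whose cost functions take only finitely many values, and then to transfer that SNE back as an $\alpha$-approximate equilibrium of $G$. Fix $\alpha > 0$ and let $C$ be a uniform bound on all $c_f$. I would pick a rounding scale $\delta < \alpha/2$ (e.g.\ $\delta = \alpha/3$) and define $\tilde c_f(\ell) := \delta \lfloor c_f(\ell)/\delta \rfloor$. The functions $\tilde c_f$ are non-negative and non-decreasing (so the resulting game $\tilde G$ is still an infinite bottleneck congestion game to which Theorem~\ref{thm:all-alpha} applies), satisfy $|c_f - \tilde c_f| \leq \delta$ pointwise, and take values in the finite set $\{0, \delta, 2\delta, \dots, \lfloor C/\delta\rfloor \delta\}$. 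The private costs $\tilde\pi_i$ of $\tilde G$ therefore satisfy $|\pi_i - \tilde\pi_i| \leq \delta$ and also lie in this finite set.

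Next I would exhibit an exact SNE of $\tilde G$. By Theorem~\ref{thm:all-alpha}, $\tilde G$ has the $\tilde\pi$-LIP. Crucially, the image $V := \{\tilde\pi(\xi) : \xi \in \Delta\}$ is a finite subset of $\R^n_+$, so the sorted-lex relation $\succ$ is a total preorder on $V$ with a minimal element $v^*$. Choose any $\xi^*$ with $\tilde\pi(\xi^*) = v^*$. If some coalition $S$ had an improving deviation $\nu_S$ from $\xi^*$ in $\tilde G$, the $\tilde\pi$-LIP would give $v^* = \tilde\pi(\xi^*) \succ \tilde\pi(\nu_S, \xi^*_{-S}) \in V$, contradicting the minimality of $v^*$. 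Hence $\xi^*$ is an SNE of $\tilde G$.

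Finally, I would transfer $\xi^*$ back to $G$. Suppose for contradiction that some coalition $S$ has a deviation $\nu_S$ in $G$ with $\pi_i(\xi^*) - \pi_i(\nu_S, \xi^*_{-S}) > \alpha$ for all $i \in S$. Combined with $|\pi_i - \tilde\pi_i| \leq \delta$, this yields $\tilde\pi_i(\xi^*) - \tilde\pi_i(\nu_S, \xi^*_{-S}) > \alpha - 2\delta > 0$ for all $i \in S$, so the deviation is already improving in $\tilde G$, contradicting the SNE property just established. The main obstacle is producing an exact SNE of $\tilde G$ without a continuity assumption on $\tilde c_f$ (which would be needed to invoke Theorem~\ref{theorem:continuous}); I expect this to be resolved by the observation that the finiteness of the image of $\tilde\pi$ collapses the sorted-lex preorder to one on a finite set, which automatically admits a minimal element that the LIP promotes to an SNE.
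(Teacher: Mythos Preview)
Your argument is correct. The discretization $\tilde c_f = \delta\lfloor c_f/\delta\rfloor$ preserves non-negativity and monotonicity, so Theorem~\ref{thm:all-alpha} applies to $\tilde G$ and gives the $\tilde\pi$-LIP; since the image of $\tilde\pi$ is finite, a sorted-lex minimal value exists and any preimage is an exact SNE of $\tilde G$; the pointwise bound $0\le \pi_i-\tilde\pi_i<\delta$ then pushes this to an $\alpha$-approximate SNE of $G$ for $\delta<\alpha/2$ (in fact already for $\delta<\alpha$).

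This is a genuinely different route from the paper's proof. The paper does not discretize; instead it builds an explicit $\alpha$-generalized strong ordinal potential $P_{M(\alpha)}(\xi)=\sum_{i,f}\psi_{i,f}(\xi)^{M(\alpha)}$ directly on $G$, with $M(\alpha)$ chosen large enough (in terms of $\psi_{\max}$, $n$, $m$, and $\alpha$) that every $\alpha$-improving move lowers $P_{M(\alpha)}$ by at least $(\alpha/2)^{M(\alpha)}$ (Lemma~\ref{lem:approximate}). Any near-minimizer of this bounded potential is then an $\alpha$-approximate SNE. Your approach is conceptually lighter: it avoids the quantitative potential estimate and reduces everything to a finiteness-of-image argument plus the LIP. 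The paper's approach, on the other hand, yields a single real-valued function whose (approximate) minimization produces the equilibrium; this is precisely what the subsequent proposition exploits to obtain a polynomial-time algorithm for splittable bottleneck routing via convex minimization of $P_{M(\alpha)}$, something your discretization argument does not deliver directly.
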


Note that there is a fundamental difference to the result of Monderer
and Shapley~\cite{Monderer:1996}, who showed that every infinite game
having an \emph{exact} bounded potential (on a compact strategy space)
possesses an $\alpha$-PNE for every $\alpha>0$.  Monderer and Shapley
use in their proof that the payoff difference of a deviating player is
equal to the potential difference.  This, however, is not true for
generalized ordinal potentials and, in fact, we crucially exploit the
combinatorial structure of infinite bottleneck congestion games in the
proof of Theorem~\ref{thm:approximate}.

We finally show that there is a polynomial algorithm computing an
$\alpha$-approximate SNE for every $\alpha>0$ for the class of
bottleneck routing games with splittable flow and bounded convex cost
functions, where the upper bound is polynomial in the input size.
\begin{proposition}
  Consider a class of splittable bottleneck routing games on
  multi-commodity graphs $\mathcal{D}=(V,A)$ with bounded convex
  arc-cost functions, that are bounded by a constant $C$, and players
  having arbitrary positive demands. Then, there is a polynomial
  algorithm computing an $\alpha$-SNE for arbitrary $\alpha>0$.
\end{proposition}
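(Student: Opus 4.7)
The plan is to reduce the problem to approximately minimizing a convex potential over the splittable-flow polytope, in the same spirit as Proposition~\ref{prop:C}. First, I would scale every arc cost by $1/C$ so that $c_a(\ell_a(\xi))\in[0,1]$ throughout the strategy space $\Delta$; since an $(\alpha/C)$-approximate SNE of the scaled game is an $\alpha$-approximate SNE of the original one and $C$ is polynomially bounded by assumption, this costs only a polynomial factor. On the scaled instance I would introduce the potential
\[
P_M(\xi)\;=\;\sum_{a\in A}c_a\bigl(\ell_a(\xi)\bigr)^{M},
\]
with $M$ of order $\Theta\bigl(\log(|A|)\cdot C/\alpha\bigr)$, which is polynomially bounded. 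Each load $\ell_a$ is affine in the flow variables encoding $\xi$, each $c_a$ is non-negative convex, and $t\mapsto t^{M}$ is non-decreasing convex on $[0,\infty)$, so each summand is convex and hence $P_M$ is a convex function on the compact polytope $\Delta$.

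The next step is to quantify the drop of $P_M$ along an $\alpha$-improving move. Adapting the argument behind item~5 of Theorem~\ref{theorem:equivalent} to the approximate setting, with $\alpha/C$ playing the role of $\epsilon_{\min}$ and $1$ the role of $\phi_{\max}$, I would show that the above choice of $M$ guarantees
\[
(\xi,\nu)\in I^{\alpha}\;\Longrightarrow\;P_M(\xi)-P_M(\nu)\ \geq\ \delta
\]
for some $\delta>0$ polynomially bounded below in $1/|A|$, $1/C$, and $\alpha$. The key observation is that at least one coalition member reduces her bottleneck scaled cost by at least $\alpha/C$, while every other scaled arc cost stays in $[0,1]$, so the exponent $M$ makes the bottleneck's $M$th-power contribution dominate any simultaneous increases of the other arc contributions.

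The algorithm then runs the ellipsoid method on the convex program $\min_{\xi\in\Delta}P_M(\xi)$, whose feasible region is a product of standard single-commodity flow polytopes admitting a polynomial-time separation oracle, until a profile $\xi^{\ast}$ with $P_M(\xi^{\ast})\leq \min_{\xi\in\Delta}P_M(\xi)+\delta/2$ is returned; by convexity of the objective and polynomial encoding of $\Delta$, this takes time polynomial in $\log(1/\delta)$ and in the input size. If $\xi^{\ast}$ were not an $\alpha$-SNE of the scaled game, there would exist an $\alpha$-improving move $(\xi^{\ast},\nu)\in I^{\alpha}$ with $P_M(\nu)\leq P_M(\xi^{\ast})-\delta < \min_{\xi\in\Delta}P_M(\xi)$, a contradiction. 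Rescaling back yields an $\alpha$-SNE of the original game.

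The main obstacle will be the quantitative drop estimate: one must carefully bound how much the scaled cost of each arc can grow under a coordinated $\alpha$-deviation while extracting the guaranteed bottleneck decrease of the player in the sorted order, and then calibrate $M$ polynomially so that the net effect on the $M$th-power sum is strictly negative by a polynomially bounded amount. This step is a continuous, approximate refinement of the discrete LIP potential argument that underlies the existence results for bottleneck congestion games; everything else in the plan (scaling, convexity, ellipsoid method, contradiction) is standard once this estimate is in place.
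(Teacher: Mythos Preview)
Your potential $P_M(\xi)=\sum_{a\in A}c_a(\ell_a(\xi))^{M}$ is not an $\alpha$-generalized strong ordinal potential for bottleneck games, so the ``quantitative drop estimate'' you plan to prove is false in general. The paper gives exactly the counterexample you need just after Theorem~\ref{thm:all-increasing}: two players with unit demand, two parallel arcs $f,g$ with $c_f\equiv 10$ and $c_g(\ell)=\ell$. If both players sit on $f$ and player~1 alone moves to $g$, her private cost drops from $10$ to $1$, so this is an $\alpha$-improving move for every $\alpha<9$; but the load-based vector $(c_f(\ell_f),c_g(\ell_g))$ goes from $(10,0)$ to $(10,1)$, and your potential \emph{increases} from $10^{M}$ to $10^{M}+1$ for every $M$. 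The flaw in your ``key observation'' is that a player's bottleneck cost decreases because she \emph{leaves} the expensive arc, not because that arc's cost $c_a(\ell_a)$ itself decreases; when $c_a$ is flat (or the coalition's contribution to $\ell_a$ is small), the corresponding summand of $P_M$ does not move at all, and nothing offsets the growth on the newly used arcs.

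The paper avoids this by working with the potential $P_{M(\alpha)}(\xi)=\sum_{i\in N}\sum_{f\in F}\psi_{i,f}(\xi)^{M(\alpha)}$ from Lemma~\ref{lem:approximate}, where $\psi_{i,f}(\xi)=c_f(\ell_f(\xi))$ if $f\in F_i(\xi)$ and $0$ otherwise. This records \emph{which players use each arc}, not just aggregate loads, and the lemma shows it drops by at least $(\alpha/2)^{M(\alpha)}$ on every $\alpha$-improving move. The paper's proof of the proposition then simply invokes this lemma, applies the $1/C$ scaling of Proposition~\ref{prop:C}, and appeals to convex minimization (ellipsoid) over the flow polytope. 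If you want to salvage your plan you must switch to this $\psi$-based potential (or, equivalently, weight each arc term by how many players currently use it); the purely load-based sum you wrote down cannot work, regardless of how $M$ is calibrated.
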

\begin{proof}
  The $\alpha$-generalized strong ordinal potential function
  $P_{M(\alpha)}$ as defined in Lemma~\ref{lem:approximate} is convex
  (as $c$ is convex) and the flow polytope is compact.  Thus, one can
  apply the same scaling argument as in Proposition~\ref{prop:C} to
  obtain a polynomial time algorithm (e.g., the ellipsoid method)
  computing a splittable flow $z$ that satisfies $P_{M(\alpha)}(z)\leq
  \min_{\xi\in \Delta} P_{M(\alpha)}(\xi)-\epsilon$ for arbitrary
  $\epsilon>0$.  \qed\end{proof}

\section{Extensions}
We present two extensions that (as we feel) are the most interesting ones.

Recently, Rozenfeld \cite{Rozenfeld:2007} introduced an even stronger
solution concept than SNE.  In a \emph{super strong equilibrium},
mnemonic SSNE, no coalition has an alternative strategy that is
profitable to \emph{one} of its members while being at least neutral
to all other members. Formally, a strategy profile $x$ is a super
strong Nash equilibrium if there is no coalition $\emptyset \neq
S\subseteq N$ such that there is an alternative strategy profile $y_S
\in X_S$ with $\pi_i(y_S,x_{-S})-\pi_i(x)\leq 0$ for all $i\in S$,
where the inequality is strict for at least on player in $S$.
Clearly, every SSNE is also a SNE but the converse need not hold.
However, one can generalize the concept of LIP to also allow for the
enlarged set of improving moves. It follows that all our results
regarding SNE for finite and infinite bottleneck congestion games
carry over to SSNE.

A natural generalization of bottleneck congestion games can be
obtained by assuming that players are \emph{heterogeneous} with
respect to the cost of the most expensive facility, that is, they
attach different \emph{values} to the cost of the most expensive
facility.  We can model this heterogeinity by introducing a
player-specific function $\varpi_i : \mathbb{R}_+ \to \mathbb{R}_+$
that maps the cost of a facility to the private cost experienced by
the player. Assuming that higher costs on facilities are associated
with higher private costs, that is, $\varpi_i$ is strictly increasing
for all $i \in N$, we can actually show that these games possess the
LIP (though not the $\pi$-LIP).  \bibliographystyle{abbrv}
\bibliography{Bib}

\newpage
\appendix
\section*{Appendix}

\subsection*{Proof of Theorem~\ref{theorem:equivalent}}
\begin{proof}
  We prove $1.\Rightarrow 2.\Rightarrow 3.\Rightarrow 4.\Rightarrow
  5.\Rightarrow 1.$

  $1.\Rightarrow 2.$ is trivial.

  $2.\Rightarrow 3.$: We use that every directed acyclic graph
  possesses a topological order, which gives rise to a generalized
  ordinal potential by simply assigning real-valued labels to strategy profiles
  according to their topological order.
 
  $3.\Rightarrow 4.$: Let $Q$ be a generalized ordinal potential
  function and let $Q_{\min}=\min_{x\in X}Q(x)$ . We define
  $\phi(x)=Q(x)+Q_{\min}$ and it follows that $G$ together with $\phi$
  has the LIP.
  
  $4.\Rightarrow 5.$: Let $(x,(y_S,x_{-S})) \in I(S), S\subseteq N$ be
  arbitrary and let $\phi$ be as in the definition of LIP.  We will
  show that there is a constant $M > 0$ such that $P(x) -
  P(y_S,x_{-S}) = \sum_{i \in N} \phi_i(x)^{M'} -
  \phi_i(y_S,x_{-S})^{M'} > 0$ for all $M < M'$. To this end, we denote
  by $\tilde{\phi}(x)$ and $\tilde{\phi}(y_S,x_{-S})$ the vectors that
  arise by sorting $\phi(x)$ and $\phi(y_S,x_{-S})$ in non-increasing
  order. As $\phi(y_S,x_{-S}) \prec \phi(x)$, there is an index $m \in
  \{1,\dots,q\}$ such that $\tilde{\phi}_i(x) =
  \tilde{\phi}_i(y_S,x_{-S})$ for all $i < m$ and $\tilde{\phi}_m(x) <
  \tilde{\phi}_m(y_S,x_{-S})$. We then obtain
\begin{align}
  P_{M'}(x) - P_{M'}(y_S,x_{-S})&= \sum_{i=1}^q \phi_i(x)^{M'} - \sum_{i=1}^q\phi_i(y_S,x_{-S})^{M'} \notag\\
  &= \tilde{\phi}_m(x)^{M'} - \tilde{\phi}_m(y_S,x_{-S})^{M'} + \sum_{i=m+1}^q \tilde{\phi}_i(x)^{M'} - \sum_{i=m+1}^q\tilde{\phi}_i(y_S,x_{-S})^{M'} \notag\\
  &\geq \tilde{\phi}_m(x)^{M'} - \tilde{\phi}_m(y_S,x_{-S})^{M'}  - (q-m)\tilde{\phi}_m(y_S,x_{-S})^{M'}\notag\\
  &\geq \tilde{\phi}_m(x)^{M'} - q\tilde{\phi}_m(y_S,x_{-S})^{M'}.
  \label{equation:last_line}
\end{align}
Standard calculus shows that the expression on the right hand side
of~\eqref{equation:last_line} is positive if 
\[M'~>~\log(q) \,/\, \bigl(\log(\tilde{\phi}_m(x)) -
\log(\tilde{\phi}_m(y_S,x_{-S}))\bigr) > 0.\] Clearly, $M'$ depends on
$(x,y) \in I$, but as the number of improvement steps is finite, we
may chose $M := \max_{(x,y) \in I} M'\bigl((x,y)\bigr)$ and get the
claimed result.

$5.\Rightarrow 1.$ is trivial.  
 \qed\end{proof}

\subsection*{Proof of Corollary~\ref{cor:ord-pot}}
\begin{proof}
  In the proof of Theorem \ref{theorem:equivalent} we need $M >
  \log(q) \,/\, \bigl(\log(\tilde{\phi}_m(x)) -
  \log(\tilde{\phi}_m(y))\bigr)$ for every improving move
  $(x,y) \in I$. Here, $m$ is the first index such that
  $\tilde{\phi}_m(x) > \tilde{\phi}_m(y)$.  The mean value
  theorem implies that $\log(\tilde{\phi}_m(x)) -
  \log(\tilde{\phi}_m(y)) = \bigl(\tilde{\phi}_m(x) -
  \tilde{\phi}_m(y)\bigr)\, / \, \xi$ for some $\xi \in
  \bigl(\tilde{\phi}_m(x), \tilde{\phi}_m(y)\bigr)$ and hence
\begin{align*}
  M > \log(q) \frac{\phi_{\max}}{\epsilon_{\min}} \geq \frac{\log(q)
    \,\tilde{\phi}_m(x)}{\tilde{\phi}_m(x) -
    \tilde{\phi}_m(y)} \geq
  \frac{\log(q)}{\log(\tilde{\phi}_m(x)) -
    \log(\tilde{\phi}_m(y))},
\end{align*}
establishing the result.
\qed\end{proof}
\subsection*{Proof of Theorem~\ref{thm:stab}}
\begin{proof}
  Let $p, q \in \mathbb{N}$ and $p < q$. As the $L_p$-norm is
  decreasing in $p$ and by H\"older's inequality, we get $L_q(x) \leq
  L_p(x) \leq \sqrt[q]{n^{q-p}} L_q(x)$, and for the special case $q =
  \infty$ we get $L_\infty(x) \leq L_p(x) \leq \sqrt[p]{n}
  L_\infty(x)$. The latter inequality implies that $\lim_{p \to
    \infty} L_p(x) = L_\infty(x)$.
    
  As $G$ satisfies the $\pi$-LIP, $P_{M}(x) = \sum_{i \in N} \pi(x)^M$
  is a generalized strong ordinal potential of $G$ for all $M$ large
  enough and hence the minimum $x^*$ of $P_M$ over $X$ is a SNE. As
  the $M$th root is a monotone function, $x^*$ minimizes $L_M(x) =
  \sqrt[M]{\sum_{i \in N} \pi(x)^M}$ as well.  As $X$ is finite, we
  can choose $M$ large enough so that the strategy profile $x^*$
  minimizes $L_\infty(x)$.

  For proving the second claim, let $x^*$ be a strong Nash equilibrium
  minimizing the generalized strong ordinal potential $P_M$ and
  $L_M(x) = \sqrt[M]{\sum_{i \in N} \pi(x)^M}$ and let $y$ be a
  strategy profile minimizing $L_p$. Then, we derive the inequalities
  $ L_p(x^*) \leq \sqrt[M]{n^{M-p}}L_M(x^*) \leq \sqrt[M]{n^{M-p}}
  L_M(y) \leq \sqrt[M]{n^{M-p}} L_p(y).  $ The second inequality is
  valid as $x^*$ is a potential minimizer. Thus, $L_p(x^*) \leq n^{1 -
    p/M}L_p(y) < nL_p(y), $ which proves the claimed result.
  \qed\end{proof} In the following we provide an example of a class of
games with the $\pi$-LIP whose parameters can be chosen in a way such
that the price of stability w.r.t. $L_p$ is arbitrarily close to
$\sqrt[p]{n}$, implying that the result of Theorem \ref{thm:stab}
w.r.t $L_1$ is tight.

\begin{example}[Price of stability]
\label{example:root}
We consider the game $G = (N,X,\pi)$ with $N = \{1,\dots,n\}$ with $X_1 = X_2 = \{0,1\}$ and $X_i = \{0\}$ for $3\leq i\leq n$. Private costs are shown in Fig.~\ref{figure:examples}a.

\begin{figure}[bp]
\centering
\begin{minipage}{0.45\linewidth}
\centering
\hspace{1cm} $0$ \hspace{2.5cm} $1$ \\[3pt]
\begin{tabular}{c | c | c |}
\cline{2-3}
$0$~~ 	& ~$(k-\epsilon,k-\epsilon, k - \epsilon, \dots, k - \epsilon)$~ & ~$(k,k, k\dots, k)$~ \\ 
\cline{2-3}
$1$~~ 	& ~$(k,0,0,\dots,0)$~ & ~$(k,\epsilon,k, \dots, k)$~\\
\cline{2-3}
\end{tabular}\\[6pt]
\textbf{a)}
\end{minipage}
\hspace{0.05\linewidth}
\begin{minipage}{0.45\linewidth}
\centering
\hspace{0.4cm} $0$ \hspace{0.5cm} $1$ \\[3pt]
\begin{tabular}{c | c | c |}
\cline{2-3} 
$0$~~ & ~$(0,0)$~ & ~$(0,k)$~\\ 
$1$~~ & ~$(k,k)$~ & ~$(0,k)$~\\
\cline{2-3}
\end{tabular}\\[6pt]
\textbf{b)}
\end{minipage}
\caption{\textbf{a)} Private costs received by the players for
  strategy profiles $X_1 \times X_2$ of the game considered in Example
  \ref{example:root}. \textbf{b)} A game with unbounded price of
  anarchy w.r.t. any $L_p$-norm.}
  \label{figure:examples}

\end{figure}
It is straightforward to check that this game has the $\pi$-LIP. The
unique SNE is the strategy profile $(0,\dots,0)$ realizing a private cost vector of
$(k-\epsilon, \dots, k-\epsilon)$. For any $p \in \mathbb{N}$, there
is $\epsilon > 0$ such that $L_p(\cdot)$ is maximized in strategy profiles $(1,0,0,\dots,0)$ realizing a cost vector of $(k,0,\dots,0)$. Hence the price of stability approaches
$\sqrt[p]{n}$ arbitrarily close.
\end{example}

So far, our results concern the price of stability only. The next
example shows that games with the $\pi$-LIP may have a price of anarchy
that is unbounded.

\begin{example}[Unbounded price of anarchy]
  Consider the game $G = (N,X,\pi)$ with $N = \{1,2\}$, $X_1 = X_2 =
  \{0,1\}$ and private costs given in Fig.~\ref{figure:examples}b for
  any $k>0$. It is straightforward to check that this game has the
  $\pi$-LIP and that both $(0,0)$ and $(1,1)$ are SNE.  Hence, the
  price of anarchy w.r.t.\ any $L_p$ norm is unbounded from above.
\end{example}
\subsection*{Proof of Corollary~\ref{cor:min-max}}
We establish the result by proving the following characterization:
\begin{lemma}
  Let $G$ be a finite strategic game having the $\pi$-LIP.  Then, a
  strategy profile $x$ minimizes $P_{M}$ as defined in
  Theorem~\ref{theorem:equivalent} with $\phi=\pi$ if and only if $x$
  is min-max fair.
\end{lemma}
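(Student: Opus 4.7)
The plan is to combine the fact that $P_M$ realizes the sorted lexicographic order on $X$ for $M$ sufficiently large with the classical equivalence between sorted lex minimality of the private cost vector and min-max fairness. Extending the polynomial-domination argument of Corollary~\ref{cor:ord-pot} from improving moves to arbitrary pairs $x, y \in X$, I would show that for $M > \log(n)\,\pi_{\max}/\epsilon_{\min}$ we have $P_M(x) \leq P_M(y)$ iff $\tilde\pi(x) \preceq \tilde\pi(y)$, with equality iff $\tilde\pi(x)$ and $\tilde\pi(y)$ agree as multisets: the first sorted index at which they disagree contributes an $M$-th power term that dominates the remaining tail. Consequently, $x$ minimizes $P_M$ iff $\tilde\pi(x)$ is a sorted lex minimum of $\{\tilde\pi(y) : y \in X\}$, and it suffices to prove this coincides with $x$ being min-max fair.

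For the direction ``$x$ is min-max fair $\Leftarrow \tilde\pi(x)$ is lex min'', I would argue the contrapositive. Given a pair $(y, i)$ witnessing the failure of min-max fairness, set $c = \pi_i(x)$ and introduce the level sets $N_x^{\geq c} = \{j : \pi_j(x) \geq c\}$ and $N_y^{\geq c} = \{j : \pi_j(y) \geq c\}$. The violation forces $\pi_j(y) \leq \pi_j(x)$ for every $j$ contributing to the top portion of $\tilde\pi(y)$, yielding $N_y^{\geq c} \subseteq N_x^{\geq c}$ and, since $i \in N_x^{\geq c} \setminus N_y^{\geq c}$, the strict inequality $|N_y^{\geq c}| < |N_x^{\geq c}|$. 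Pointwise domination on $N_y^{\geq c}$ combined with the subset inclusion shows that the top $|N_y^{\geq c}|$ entries of $\tilde\pi(y)$ are componentwise $\leq$ the corresponding entries of $\tilde\pi(x)$; at the $(|N_y^{\geq c}|+1)$-st position $\tilde\pi(x)$ is $\geq c$ while $\tilde\pi(y)$ is $< c$. Together these yield $\tilde\pi(y) \prec \tilde\pi(x)$, contradicting lex minimality.

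For the reverse direction, suppose $\tilde\pi(y) \prec \tilde\pi(x)$ strictly for some $y \in X$; I would construct a witness of min-max failure at $x$. Set $B^* = \{k : \pi_k(y) > \pi_k(x)\}$, $t^* = \max_{k \in B^*} \pi_k(x)$ (with $t^* = -\infty$ if $B^*$ is empty), and $T = \{k : \pi_k(x) > t^*\}$. Any strict improver $i \in T$ witnesses a min-max violation, since every $j$ with $\pi_j(x) \geq \pi_i(x) > t^*$ lies in $T$ and hence, by definition of $t^*$, cannot have worsened. The main obstacle is establishing that $T$ contains a strict improver. The idea is to argue by contradiction: if every $k \in T$ satisfied $\pi_k(y) = \pi_k(x)$ (since no $k \in T$ can worsen by construction), then the top $|T|$ sorted entries of $\tilde\pi(y)$ coincide with those of $\tilde\pi(x)$; but any $k^* \in B^*$ achieving $\pi_{k^*}(x) = t^*$ (which exists if $B^* \neq \emptyset$) would produce $\pi_{k^*}(y) > t^* \geq \tilde\pi(x)_{|T|+1}$ in $y$, forcing $\tilde\pi(y)_{|T|+1} > \tilde\pi(x)_{|T|+1}$ and hence $\tilde\pi(x) \prec \tilde\pi(y)$, contradicting our assumption. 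When $B^* = \emptyset$ we have $T = N$ and a strict improver exists immediately since $\tilde\pi(y) \neq \tilde\pi(x)$.
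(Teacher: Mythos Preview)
Your reduction to sorted-lex minimality is the right framework, and your argument for the direction ``min-max fair $\Rightarrow$ sorted-lex minimal'' via $B^*$, $t^*$, and $T$ is essentially sound (modulo the minor imprecision that the top $|T|$ entries of $\tilde\pi(y)$ need not \emph{coincide} with those of $\tilde\pi(x)$---they can be strictly larger if some $\pi_{k}(y)$ with $k\in B^*$ overshoots, but that only strengthens the contradiction). This is in fact the direction the paper dispatches in a single sentence, so you have filled in more detail than the paper does there.

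The other direction (``sorted-lex minimal $\Rightarrow$ min-max fair''), however, has a genuine gap. From a min-max violation $(y,i)$ with $c=\pi_i(x)$ you only learn that every $j$ with $\pi_j(x)\ge c$ satisfies $\pi_j(y)\le\pi_j(x)$; you have \emph{no} control over a player $j$ with $\pi_j(x)<c$, who may well have $\pi_j(y)$ arbitrarily large. Such a $j$ lies in $N_y^{\ge c}\setminus N_x^{\ge c}$, so your inclusion $N_y^{\ge c}\subseteq N_x^{\ge c}$ fails. Concretely, take a three-player game with exactly two profiles $x,y$ in which only player~$1$ has a non-trivial strategy set and $\pi_1(x)=\pi_1(y)$, so that $I=\emptyset$ and the $\pi$-LIP holds vacuously; set $\pi(x)=(10,5,1)$ and $\pi(y)=(10,4,100)$. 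Then $(y,i{=}2)$ witnesses a min-max violation at $x$ with $c=5$, yet $\tilde\pi(y)=(100,10,4)\succ(10,5,1)=\tilde\pi(x)$, so $x$ is still the sorted-lex minimum and hence the $P_M$-minimizer for all large $M$. The paper's own proof of this direction simply asserts ``$P_M(x)-P_M(y)>0$ for $M$ large enough'' at precisely this step, so it glosses over the same issue rather than resolving it.
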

\begin{proof}
  $"\Rightarrow":$ Let $x$ minimize $P_{M}$ as defined in
  Theorem~\ref{theorem:equivalent} with $\phi=\pi$. Assume by
  contradiction that there is another strategy profile $y$ such that
  $\pi_i(y)<\pi_i(x)$ and $\pi_j(y)\leq \pi_j(x)$ for all $j\in N$
  with $\pi_j(x)\geq \pi_i(x)$. Then, $P_M(x)-P_M(y)>0$ for $M$ large
  enough, contradicting the minimality of $x$.

  $"\Leftarrow":$ Let $x$ be min-max fair. Assume by contradiction
  that $x$ is not a minimizer of $P_M$. This implies that there exists
  $y\in X$ with $\pi(x)\succ \pi(y)$. Thus, there exists an index $m$
  such that $\tilde{\pi}(x)_i=\tilde{\pi}(y)_i$ for all $i< m$, and
  $\tilde{\pi}(x)_m>\tilde{\pi}(y)_m$. This, however, implies that $x$
  is not min-max fair. 
\qed\end{proof}
\subsection*{Proof of Theorem~\ref{thm:all}}
\begin{proof}
  We first prove the claim for $\psi$. Consider an improving move
  $(x,(y_S,x_{-S}))\in I$. Let $j\in S$ be a member of the coalition
  with highest cost before the improvement step, i.e.,
  $j~\in~\arg\max_{i \in S} \pi_i(x)$. We set $\Psi^+ := \{(i,f) \in N
  \times F : \psi_{i,f}(x) \geq \pi_j(x)\}$ and claim that
  $\psi_{i,f}(x) \geq \psi_{i,f}(y_S,x_{-S})$ for all
  $(i,f)\in\Psi^+$.  To see this, suppose there is $(k,g) \in \Psi^+$
  such that $\psi_{k,g}(x) < \psi_{k,g}(y_S,x_{-S})$.  The
  independence of irrelevant choices and the monotonicity of cost
  functions imply that a member $i\in S$ of the coalition uses $g$. So
\begin{align*}
\pi_j(x) \geq \pi_i(x) > \pi_i(y_S,x_{-S}) \geq \psi_{k,g}(y_S,x_{-S}),
\end{align*}
which contradicts $(k,g)\in\Psi^+$. 

Now we define $\Psi^- := \{(i,f) \in N \times F : \psi_{i,f}(x) <
\pi_j(x)\}$ and claim that $\psi_{i,f}(y_S,x_{-S}) < \pi_j(x)$ for all
$(i,f)\in\Psi^-$. To see this, suppose there is $(k,g) \in \Psi^-$
such that $\psi_{k,g}(y_S,x_{-S}) \geq \pi_j(x)$.  Because of the
monotonicity of the cost functions and the independence of irrelevant
choices, there is a member $i\in S$ of the coalition using $g$ in
$(y_S,x_{-S})$ giving rise to
\begin{align*}
\pi_j(x) \geq \pi_i(x) > \pi_i(y_S,x_{-S}) \geq \psi_{k,g}(y_S,x_{-S}) \geq \pi_j(x),
\end{align*}
which is a contradiction.

We remark that $N \times F = \Psi^+ \cup \Psi^-$ and that we have shown
that $\psi_{i,f}(x) - \psi_{i,f}(y_S,x_{-s})\geq 0$ for all $(i,f) \in
\Psi_{S}^+$ and $\psi_{i,f}(y_S,x_{-S})\leq \psi_{k,g}(y_S,x_{-S})$
for all $(i,f) \in \Psi^-$ and $(k,g)\in\Psi_{S}^+$. Since $j\in S$
and $\pi_j(x)>\pi_j(y_S,x_{-S})$, there exists $(j,f)\in\Psi^+$ with
$\psi_{j,f}(x) > \psi_{j,f}(y_S,x_{-s})$. Hence, $\psi(x) \succ
\psi(y_S,x_{-S})$, finishing the first part of the proof.

To prove the LIP for $\phi$, we must show that $(\pi_i)_{i\in N}
\succ (\pi_i(y_S,x_{-S}))_{i \in N}$ for all $(x,(y_s,x_{-S})) \in I$.
We again consider a player $j~\in~\arg\max_{i \in S} \pi_i(x)$ and
decompose the set of players into the sets
\begin{align*}
  N^+ := \{i \in -S : \pi_i(x) \geq \pi_j(x)\} && N^-:= \{i \in N :
  \pi_i(x) < \pi_j(x)\}.
\end{align*}
A similar argument as in the first part shows that $\pi_i(x) \geq
\pi_i(y_S,x_{-S})$ for all $i \in N^+$ and that $\pi_i(y_S,x_{-S}) <
\pi_j(x)$ for all $i \in N^-$, establishing the result.
  \qed\end{proof}

\subsection*{Proof of Theorem~\ref{thm:identical}}
\begin{proof}
  We assign a uniform capacity of $1$ to each arc $a \in A$.  With the
  polynomial algorithm of Edmonds and Karp we obtain both a minimum
  $(s,t)$-cut $C$, and a maximum flow $x$. Let us say that $C :=
  \{a_1,\dots,a_m\}$ for some $m \in \mathbb{N}$. As we may assume
  without loss of generality that $x$ is integer and all capacities
  are $1$, we may decompose the flow $x$ into $m$ arc-disjoint paths
  $\mathcal{P}_j$.
  
  We set $k=m\,\left\lceil\frac{n}{m}\right\rceil-n$ and consider the
  strategy profile $x$ in which $\left\lfloor\frac{n}{m}\right\rfloor$
  players are routed along each path $P_j, j=1,\dots,k,$ and
  $\left\lceil\frac{n}{m}\right\rceil$ players are routed along each
  of the other paths. As all arcs have the same arc-cost function, we
  derive that $\pi_i(x) = \max_{a \in x_i} = c_a(x) = c_{a_j}(x)$ for
  some $1 \leq j \leq m$. Since every other flow traverses the cut
  $C$, there can be no deviation of a coalition $S$ that is profitable
  to all of its members, that is, strictly reduces their private
  costs.
\qed\end{proof}

\subsection*{Proof of Proposition~\ref{prop:C}}
\begin{proof}
  Instead of computing a SNE in the original game, we consider the
  game $\tilde{G}$ with arc-costs $\tilde{c}_a(x) = c_a(x)/C$.
  Obviously, each strategy profile $x\in X$ establishes an integral
  $(s,t)$-flow with value $n$.  Conversely, each such flow can be
  decomposed into $n$ paths starting in $s$ and ending in $t$, see
  \cite{AMO93}. So, there is a one-to-one correspondence between
  strategy profiles and integral $(s,t)$-flows with value $n$.

  Let $n_a(x)$ denote the number of players using facility $a$ under
  strategy profile $x$. As the bottleneck routing game has the
  $\pi$-LIP for the function $\psi$ defined in Theorem~\ref{thm:all},
  the function $P_M(x) := \sum_{a \in A} \tilde{c}_a(x)^M\,n_a(x)$ is
  a generalized strong potential function of the bottleneck routing
  game, see the construction of the generalized ordinal potential in
  Theorem~\ref{theorem:equivalent}. We can rewrite the potential using
  flow variables $x:\R^{|A|}\rightarrow \R_+$ and obtain $P_M(x) :=
  \sum_{a \in A} \tilde{c}_a(x_a)^M\,x_a$, where $x_a$ denotes the
  total flow on arc $a$. Moreover, note that the optimization problem
  $\min_{x \in X} P_M(x)$ of computing a minimal integral flow with
  convex arc-cost can be solved in polynomial time (given $P_M(x)$),
  see Ahuja et al.~\cite{AMO93}. The optimal solution $x^*$ of this
  problem minimizes the generalized strong potential function $P_M$
  and hence is a SNE.  \qed\end{proof}
\subsection*{Examples~\ref{ex:msne} and~\ref{ex:psne}}
\begin{example}\label{ex:msne}
  Consider the symmetric bottleneck routing game with players set $N
  =\{1,\dots,n\}$ depicted in Fig.~\ref{figure:multiple_sne}. The
  strategy set $X_i$ of each player $i \in N$ comprises all paths from
  $s$ to $t$, that are $P_1 := \{(sa),(at)\}, P_2 := \{(sb),(bt)\}$
  and $P_3 := \{(sb),(ba),(bt)\}$. In addition, we consider the
   cost functions
\begin{align*}
c_1(\ell) =
\begin{cases}
0 &\text{ if } \ell < n\\
1 &\text{ else },
\end{cases} && c_2(\ell) =
\begin{cases}
1 &\text{ if } \ell \leq 1\\
2 &\text{ else }.
\end{cases}
\end{align*}
and the graph depicted in Fig.~\ref{figure:multiple_sne}a.

\begin{figure}[tpb]
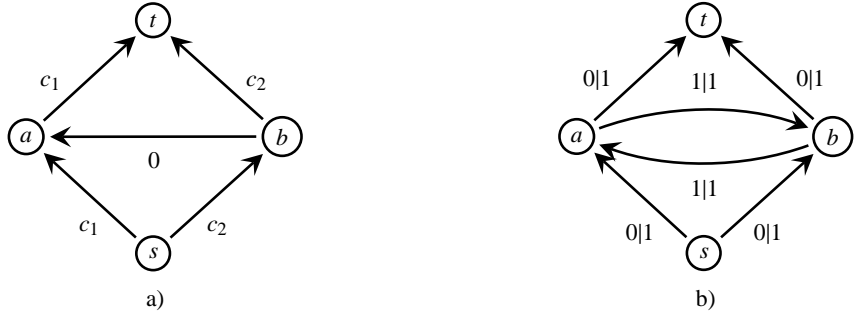

\begin{minipage}{0.45\linewidth}
\begin{center}
\psset{arrowsize=8pt,arrowlength=1,linewidth=1pt,nodesep=2pt,shortput=tablr}
\begin{psmatrix}[colsep=12mm,rowsep=10mm, mnode=circle]
		&$t$			\\
$a$	&		&$b$	\\
		&$s$		
\ncline{->}{3,2}{2,1}_{$c_1$}
\ncline{->}{2,1}{1,2}<{$c_1$}
\ncline{->}{3,2}{2,3}_{$c_2$}
\ncline{->}{2,3}{1,2}>{$c_2$}
\ncline{->}{2,3}{2,1}_{$0$}
\end{psmatrix}
\\[6pt]
a)
\end{center}

\end{minipage}
\begin{minipage}{0.45\linewidth}
\begin{center}
\psset{arrowsize=8pt,arrowlength=1,linewidth=1pt,nodesep=2pt,shortput=tablr}
\begin{psmatrix}[colsep=12mm,rowsep=10mm, mnode=circle]
 		&$t$			\\
$a$	&		&$b$	\\
		&$s$		
\ncline{->}{3,2}{2,1}_{$0|1$}
\ncline{->}{2,1}{1,2}<{$0|1$}
\ncline{->}{3,2}{2,3}_{$0|1$}
\ncline{->}{2,3}{1,2}>{$0|1$}
\ncarc[arcangle=20]{->}{2,3}{2,1}_{$1|1$}
\ncarc[arcangle=20]{->}{2,1}{2,3}^{$1|1$}
\end{psmatrix}
\\[6pt]
b)
\end{center}
\end{minipage}
\caption{a) Bottleneck routing game with multiple SNE b) Bottleneck routing game with non-strong PNE}
\label{figure:multiple_sne}
\end{figure}

There are two types of SNE.
In the first type, $n-1$ players play $P_1$ and one player plays
$P_2$. The players on $P_1$ experience a cost of $0$ while the single
player on $P_2$ experiences a cost of $1$. Thus, the sum of all costs
equals $1$. In the other type of SNE, again $n-1$ players play $P_1$
while one player plays $P_3$. Hence all players experience a cost of
$1$ and the total cost of all players sums up to $n$. The two types of
SNE actually correspond to the global minima of the generalized strong
ordinal potential functions derived from the $\pi$-LIP w.r.t. $\phi$
and $\psi$ as defined in Theorem~\ref{thm:all}.
\end{example}
\begin{example}\label{ex:psne}
  Consider the instance in Fig.~\ref{figure:multiple_sne}\,b and
  assume there are two players with unit demand each. One can easily
  see that this instance admits a PNE (routing both demands along the
  two zig-zag paths) that is not a SNE.  This example contrasts a
  result of Holzman and Law-Yone~\cite{Holzman97}, who have shown that
  for singleton congestion games, every PNE is also a SNE. Thus,
  allowing more complex strategies (paths instead of single
  facilities) makes a structural difference.
\end{example}
\subsection*{Proof of Theorem~\ref{theorem:continuous}}
\begin{proof}
  By assumption there exists $q\in \N$ and a function
  $\phi:X\rightarrow \R^q_+$ such that $\phi(y_S,x_{-S}) \prec \phi(x)$ for all $(x, (y_S,x_{-S}))\in
  I,$.  We will show
  that there exists $x_{\min}\in X$ with $\phi(x_{\min}) \preceq
  \phi(y)$ for all $x,y\in X.$ Our proof is constructive and proceeds
  in $q$ phases. In the first phase, we solve the following program
  \begin{align} \tag{$P_1$}
 \min_{x \in X} \alpha\; s.t.: \; \phi_i(x)\leq \alpha, \;\text{ for all }\; i\in \{1,\dots,q\}.
\end{align}
Note that continuity of $\phi$ implies that the half-space $H_1 =
\{x\in X : \phi_i(x)\leq \alpha\}$ is compact. To see this, observe that
$H_1 = \phi_i^{-1}((-\infty,\alpha]) =
(\phi_i^{-1}((\alpha,\infty))^\text{C}$, where $\phi^{-1}$ denotes the
pre-image of $\phi$ and $(A)^\text{C}$ denotes the complement of $A$
with respect to $X$. As $\phi$ is continuous and $(\alpha,\infty)$ is
open $\phi_i^{-1}((\alpha,\infty))$ is open and hence $H_1$ is
closed.  Hence, $H_1$ is a closed subset of the compact set $X$ and
thus compact. As the objective of~$(P_1)$ is continuous, the minimum
is attained in $H_1$ with value $\alpha_{1}\in\R_+$. Let
$A_{1}\subseteq X$ denote the set of optimal solutions and let
$B_{1}=\{j\in \{1,\dots,q\}: \text{there exists } x^* \in A_{1} \text{
  with }\phi_j(x^*)=\alpha_{1}\}$ denote the set of indices for which
the optimal value is attained.  Clearly, $B_1$ is non-empty. If
$B_1$ contains a single element, say $j$, the
lexicographical minimum $x_{\min}$ fulfills $\phi_j(x_{\min}) =
\alpha_1$ and we can proceed by solving
\begin{align}\tag{$P_2^j$}\; \min_{x \in X} \alpha \;
  \text{ s.t.: } \phi_j(x)= \alpha_1,\;\phi_i(x)\leq \alpha, \;\text{ for all } \;
  i\in \{1,\dots,q\}\setminus\{j\}.
\end{align}
If, in contrast, $B_1$ contains more than one element, we solve
program $P_2^j$ for every $j \in B_1$. Continuing this way we obtain
at most $q!$ different solution vectors $\phi\in\R^q_+$. Taking the
lexicographically smallest among them, we obtain $x_{\min}$, which is a
SNE.  \qed\end{proof}
\subsection*{Proof of Theorem~\ref{thm:all-increasing}}
\begin{proof}
  We choose a deviating player $ j \in \arg\max_{i \in S} \pi_i(x)$
  with highest cost before the improving move and one of the
  facilities $g \in \arg\max_{f \in x_j} c_f(x)$ at which $\pi_j(x)$ is
  attained. Decompose $F$ into $F^+$ and $F^-$ defined as $ F^+ := \{
  f \in F : c_f(x) \geq c_g\}$ and $F^- := \{f \in F : c_f(x) <
  c_g\}$. We claim that $c_f(y_S,x_{-S}) \leq c_f(x)$ for all $f \in
  F^+$ and that $c_f(y_S,x_{-S}) < c_g(x)$ for all $f \in F^-$, which
  establishes the result with similar arguments as in the proof of
  Theorem \ref{thm:all}. Note that we use here that $c_g$ is strictly
  decreasing as player $j$ changes her strategy profile from $x_j$ to
  $y_j$.
\qed\end{proof}

\subsection*{Proof of Theorem~\ref{thm:approximate}}
We prove the theorem by stating a useful lemma.
\begin{lemma}\label{lem:approximate}
  Let the function $\psi : \Delta \to \mathbb{R}_+^{m\,n}$ be defined
  as 
\[ \psi_{i,f}(\xi)= \begin{cases}
    c_f(\ell_f(\xi)), &\text{if $f\in F_i(\xi)$}\\
    0, &\text{else}
  \end{cases} \quad\text{ for all } i\in N, f \in F.
\] 
Moreover, let $\alpha>0$ and define $P_{M(\alpha)}(\xi):=\sum_{f\in F,
  i\in N} \psi_{i,f}(\xi)^{M (\alpha)}$, where $M(\alpha)\geq
(2\,\psi_{\max}/\alpha+1)\,\log(n\,m)$ and $ \psi_{\max} := \sup_{\xi
  \in X, 1 \leq f \leq m} c_f(\ell_f(\xi)).$ Then, $P_{M(\alpha)}$ is
an $\alpha$-generalized ordinal potential function satisfying
  \[ P_{M(\alpha)}(\xi)-
  P_{M(\alpha)}(\xi')\geq \big(\tfrac{\alpha}{2}\big)^{M(\alpha)} \text{ for
    all }(\xi,\xi')\in I^{\alpha}.\]
\end{lemma}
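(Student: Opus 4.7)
The plan is to follow the template of Corollary~\ref{cor:ord-pot} applied to the function $\psi$ from Theorem~\ref{thm:all-alpha}, but with a quantitative facility-level refinement that turns the $\alpha$-improvement at a single coordinate $(j,g)$ into the quantitative bound $(\alpha/2)^{M(\alpha)}$. Given $(\xi,\xi')\in I^{\alpha}$ effected by a coalition $S$, pick a worst-off member $j\in\arg\max_{i\in S}\pi_i(\xi)$ and one of her bottleneck facilities $g\in\arg\max_{f\in x_j}c_f(\xi)$. Then $\psi_{j,g}(\xi)=\pi_j(\xi)$, and the $\alpha$-improvement yields $\psi_{j,g}(\xi')\le\pi_j(\xi')<\pi_j(\xi)-\alpha$; in particular $\pi_j(\xi)>\alpha$. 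Replay the proof of Theorem~\ref{thm:all-alpha} to obtain the decomposition $N\times F=\Psi^+\cup\Psi^-$, where $\Psi^+:=\{(i,f):\psi_{i,f}(\xi)\ge\pi_j(\xi)\}$ satisfies $\psi_{i,f}(\xi)\ge\psi_{i,f}(\xi')$ pointwise and $\Psi^-$ satisfies $\psi_{i,f}(\xi')<\pi_j(\xi)$ pointwise.

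The step I expect to be the main obstacle is a sharpening of the $\Psi^-$-bound by the $\alpha$-gap: I claim that for every $(i,f)\in\Psi^-$ with $\psi_{i,f}(\xi')>\psi_{i,f}(\xi)$, one actually has the strictly stronger bound $\psi_{i,f}(\xi')<\pi_j(\xi)-\alpha$. I would prove this by a case distinction on $c_f(\xi')$ versus $c_f(\xi)$. If $c_f(\xi')>c_f(\xi)$, load monotonicity forces some $i'\in S$ to have newly put load on $f$, so $c_f(\xi')\le\pi_{i'}(\xi')<\pi_{i'}(\xi)-\alpha\le\pi_j(\xi)-\alpha$ by the $\alpha$-improvement. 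Otherwise $c_f(\xi')\le c_f(\xi)$, and the only remaining way for $\psi_{i,f}$ to strictly increase is that $i\in S$ newly uses $f$, whence again $\psi_{i,f}(\xi')=c_f(\xi')\le\pi_i(\xi')<\pi_j(\xi)-\alpha$. This is precisely where the $\alpha$-improvement assumption is used quantitatively, beyond what was needed for Theorem~\ref{thm:all-alpha}.

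With this refinement in hand, the $(j,g)$-term alone contributes more than $\pi_j(\xi)^M-(\pi_j(\xi)-\alpha)^M$ to $P_M(\xi)-P_M(\xi')$, the remaining $\Psi^+$-terms contribute non-negatively, and the at most $mn-1$ strictly increasing $\Psi^-$-entries together contribute at least $-(mn-1)(\pi_j(\xi)-\alpha)^M$; adding the three gives
\[
P_M(\xi)-P_M(\xi')\ \ge\ \pi_j(\xi)^M-mn\,(\pi_j(\xi)-\alpha)^M.
\]

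For the quantitative conclusion, the standard inequality $\log(1+x)\ge x/(1+x)$ applied with $x=\alpha/(\pi_j(\xi)-\alpha)$ yields $\log\bigl(\pi_j(\xi)/(\pi_j(\xi)-\alpha)\bigr)\ge\alpha/\pi_j(\xi)\ge\alpha/\psi_{\max}$, so the hypothesis $M(\alpha)\ge(2\psi_{\max}/\alpha+1)\log(nm)$ forces $\bigl(\pi_j(\xi)/(\pi_j(\xi)-\alpha)\bigr)^{M(\alpha)}\ge 2nm$. Combined with $\pi_j(\xi)\ge\alpha$, which ensures $(\alpha/2)^{M(\alpha)}\le\pi_j(\xi)^{M(\alpha)}/2^{M(\alpha)}\le\pi_j(\xi)^{M(\alpha)}/2$, this gives $\pi_j(\xi)^{M(\alpha)}-mn(\pi_j(\xi)-\alpha)^{M(\alpha)}\ge(\alpha/2)^{M(\alpha)}$ by elementary algebra, and hence the claimed bound $P_{M(\alpha)}(\xi)-P_{M(\alpha)}(\xi')\ge(\alpha/2)^{M(\alpha)}$; the degenerate cases $nm\in\{1\}$ (no non-trivial moves exist) can be checked directly.
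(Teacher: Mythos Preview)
Your argument is correct, and the overall architecture matches the paper's: decompose $N\times F$ according to a threshold tied to a distinguished coalition member $j$, show that the ``high'' part is pointwise non-increasing while the ``low'' part is uniformly bounded, and then extract the quantitative gap from the choice of $M(\alpha)$. The organization, however, differs in two places. First, the paper picks $j\in\arg\max_{i\in S}\pi_i(\xi')$ (worst-off \emph{after} the move) and keeps the block $S\times F$ separate from $\Psi^{\pm}\subseteq(-S)\times F$; with this choice the weak bound $\psi_{i,f}(\xi')\le\pi_j(\xi')$ on $\Psi^-$ suffices, because on $S\times F$ one automatically has $\psi_{i,f}(\xi')\le\pi_i(\xi')\le\pi_j(\xi')$ by maximality of $j$. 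This yields $P_M(\xi)-P_M(\xi')\ge(\pi_j(\xi')+\alpha)^M-nm\,\pi_j(\xi')^M$. You instead pick $j$ as worst-off \emph{before} the move, do not separate $S\times F$, and compensate by your sharper $\Psi^-$-claim $\psi_{i,f}(\xi')<\pi_j(\xi)-\alpha$ on the strictly increasing coordinates, arriving at the mirror bound $\pi_j(\xi)^M-nm\,(\pi_j(\xi)-\alpha)^M$. Second, for the final numerical step the paper uses superadditivity $(a+b)^M\ge a^M+b^M$ (splitting $\alpha$ as $\alpha/2+\alpha/2$), whereas you go through the ratio $(\pi_j(\xi)/(\pi_j(\xi)-\alpha))^{M}\ge 2nm$ via $\log(1+x)\ge x/(1+x)$; both are valid and lead to the same $(\alpha/2)^{M(\alpha)}$ conclusion. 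One cosmetic remark: in the infinite setting your bottleneck facility should be written $g\in\arg\max_{f\in F_j(\xi)}c_f(\ell_f(\xi))$ rather than $f\in x_j$.
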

  \begin{proof}
    We must show that $P_{M(\alpha)}(\xi)-P_{M(\alpha)}(\nu_S,\xi_{-S})\geq
    \big(\tfrac{\alpha}{2}\big)^{M(\alpha)}$ for an arbitrary $\alpha$-improving move
    $(\xi,(\nu_S,\xi_{-S})) \in I^{\alpha}$.  Let $j\in S$ with
    $j~\in~\arg\max_{i \in S} \pi_i(\nu_S,\xi_{-S})$.
   
    We define

    $\Psi^+ := \{(i,f) \in -S \times F : \psi_{i,f}(\xi) \geq
    \pi_j(\nu_S,\xi_{-S})\}$ and $\Psi^- := \{(i,f) \in -S \times F :
    \psi_{i,f}(\xi) < \pi_j(\nu_S,\xi_{-S})\}$.

\begin{claim}

  \begin{enumerate}
\item 
 $\psi_{i,f}(\xi) \geq \psi_{i,f}(\nu_S,\xi_{-S})$ for all
  $(i,f)\in\Psi^+$
\item $\psi_{i,f}(\nu_S,\xi_{-S}) \leq \pi_j(\nu_S,\xi_{-S})$ for all
$(i,f)\in\Psi^-$.
\end{enumerate}
\end{claim}
\begin{proof}

  To prove the first claim, suppose there is $(k,g) \in \Psi^+$ such
  that $\psi_{k,g}(\xi) < \psi_{k,g}(\nu_S,\xi_{-S})$.  Because of the
  monotonicity of cost functions there exists $i\in S$ with $g\in
  F_i(g)$ implying
\begin{align*}
  \pi_j(\nu_S,\xi_{-S})\leq \psi_{k,g}(\xi)<\psi_{k,g}(\nu_S,\xi_{-S})\leq
  \pi_i(\nu_S,\xi_{-S}) \leq \pi_j(\nu_S,\xi_{-S}),
\end{align*}
which is a contradiction.  

For proving the second claim, suppose there is $(k,g) \in \Psi^-$ such
that $\psi_{k,g}(\nu_S,\xi_{-S}) >\pi_j(\nu_S,\xi_{-S})$.  Again,
monotonicity of cost functions implies that there is $i\in S$ with
$g\in F_i(g)$ giving rise to
\begin{align*}
  \pi_i(\nu_S,\xi_{-S})\geq \psi_{k,g}(\nu_S,\xi_{-S})> \pi_j(\nu_S,\xi_{-S})\geq
  \pi_i(\nu_S,\xi_{-S}),
\end{align*}
which is a contradiction. This proves the claim.
\qed\end{proof}

We observe that $N\times F=\Psi^+\cup\Psi^-\cup (F\times S)$.

Then,
\begin{align*}
  P_{M(\alpha)}(\xi)-P_{M(\alpha)}(\nu_S,\xi_{-S})
  &=\sum_{f\in \Psi^+\cup\Psi^-\cup (F\times S)} \psi_{i,f}(\xi)^{M (\alpha)}-\psi_{i,f}(\xi,y_{-S})^{M (\alpha)}\\
  &\geq \sum_{f\in \Psi^-\cup (F\times S)} \psi_{i,f}(\xi)^{M
    (\alpha)}-\psi_{i,f}(\nu_S,\xi_{-S})^{M (\alpha)}.
\end{align*}
The inequality follows from the first claim.  We further derive
\begin{align*}
\sum_{f\in \Psi^-\cup (F\times S)} \psi_{i,f}(\xi)^{M
     (\alpha)}-\psi_{i,f}(\nu_S,\xi_{-S})^{M (\alpha)}&\geq  \sum_{f\in
    (F\times S)} \psi_{i,f}(\xi)^{M (\alpha)}-\sum_{f\in \Psi^-\cup
    (F\times
    S)}\psi_{i,f}(\nu_S,\xi_{-S})^{M (\alpha)}\\
  &\geq (\pi_j(\nu_S,\xi_{-S})+\alpha)^{M
    (\alpha)}-(n\,m)\,\pi_j(\nu_S,\xi_{-S})^{M (\alpha)},
\end{align*}
where the first inequality follows from the non-negativity of $\psi$.
The second inequality follows from $\pi_j(\xi) \geq
\pi_j(\nu_S,\xi_{-S})+\alpha$ and the second claim.  To this end, we
obtain
\begin{align*}
  P_{M(\alpha)}(\xi)-P_{M(\alpha)}(\nu_S,\xi_{-S}) &\geq
  (\alpha/2)^{M(\alpha)}+(\pi_j(\nu_S,\xi_{-S})+\alpha/2)^{M
    (\alpha)}-(n\,m)\,\pi_j(\nu_S,\xi_{-S})^{M (\alpha)}\\
  &\geq (\alpha/2)^{M(\alpha)},
\end{align*}
where the last inequality follows from the choice of $M(\alpha)$.
\qed\end{proof}

\begin{proof}[Proof of Theorem~\ref{thm:approximate}]
  Fix $\alpha >0$. Then, since $\Delta$ is compact and $P_{M(\alpha)}$
  (as defined in Lemma~\ref{lem:approximate}) is bounded, there exists
  a strategy profile $z$ satisfying $P_{M(\alpha)}(z)\leq \inf_{\xi\in \Delta}
  P_{M(\alpha)}(\xi)-\epsilon$ with
  $0<\epsilon<\big(\frac{\alpha}{2}\big)^{M(\alpha)}$.  We claim that
  $z$ is an $\alpha$-approximate SNE.  Suppose not. Then there exists a profitable deviation
  $\nu_S\in \Delta_S$ with $
  P_{M(\alpha)}(z)-P_{M(\alpha)}(\nu_S,z_{-S})\geq(\alpha/2)^{M(\alpha)}>\epsilon$
  (by Lemma~\ref{lem:approximate}), which contradicts the
  approximation guarantee of $z$.
\qed\end{proof}

\end{document}